\def\d{\,\mathrm{d}}
\def\laweq{\buildrel \d \over =}
\newcommand{\VaR}{\mathrm{VaR}}
\newcommand{\RVaR}{\mathrm{RVaR}}
\newcommand{\ES}{\mathrm{ES}}
\newcommand{\E}{\mathbb{E}}
\newcommand{\R}{\mathbb{R}}
\newcommand{\p}{\mathbb{P}}
\newcommand{\id}{\mathds{1}}
\newcommand{\X}{\mathcal X}
\newcommand{\A}{\mathsf A}
\newcommand{\M}{\mathcal{M}} 
\renewcommand{\P}{\mathcal{P}}
\newcommand{\esssup}{\mathrm{ess\mbox{-}sup}}
\newcommand{\essinf}{\mathrm{ess\mbox{-}inf}}
\renewcommand{\ge}{\geqslant}
\renewcommand{\le}{\leqslant}
\renewcommand{\geq}{\geqslant}
\renewcommand{\leq}{\leqslant}
\renewcommand{\epsilon}{\varepsilon}
\theoremstyle{plain}
\newtheorem{theorem}{Theorem}
\newtheorem{corollary}[theorem]{Corollary}
\newtheorem{lemma}[theorem]{Lemma}
\newtheorem{proposition}[theorem]{Proposition}
\theoremstyle{definition}
\newtheorem{definition}[theorem]{Definition}
\newtheorem{example}[theorem]{Example}
\newtheorem{assumption}[theorem]{Assumption}
\newtheorem{remark}[theorem]{Remark}
\theoremstyle{definition}
\numberwithin{equation}{section} \numberwithin{theorem}{section}
\renewcommand{\cite}{\citet}
\DeclareMathOperator*{\interior}{int}
\DeclareMathOperator*{\argmin}{arg\,min}
\begin{document}
\title{{Elicitability and identifiability of tail risk measures}}

\author{Tobias Fissler\thanks{RiskLab, Department of Mathematics, ETH Zurich, Switzerland. \Letter~{\scriptsize 
\texttt{tobias.fissler@math.ethz.ch}}}\and Fangda Liu\thanks{Department of Statistics and Actuarial Science, University of Waterloo, Canada.  \Letter~{\scriptsize \texttt{fangda.liu@uwaterloo.ca}}}\and Ruodu Wang\thanks{Department of Statistics and Actuarial Science, University of Waterloo, Canada. \Letter~{\scriptsize \texttt{wang@uwaterloo.ca}}} \and Linxiao Wei\thanks{School of Mathematics and Statistics, Wuhan University of Technology, China. \Letter~{\scriptsize   \texttt{lxwei@whut.edu.cn}}}}

\date{\today} 
\maketitle
\begin{abstract}
Tail risk measures are fully determined by the distribution of the underlying loss beyond its quantile at a certain level, with {Value-at-Risk, Expected Shortfall and Range Value-at-Risk} being prime examples.
They are induced by law-based risk measures, called their generators, evaluated on the tail distribution. 
This paper establishes joint identifiability and elicitability results of tail risk measures together with the corresponding quantile, provided that their generators are identifiable and elicitable, respectively.
As an example, we establish the joint identifiability and elicitability of the tail expectile together with the quantile.
The corresponding consistent scores constitute a novel class of weighted scores, nesting the known class of scores of Fissler and Ziegel for the Expected Shortfall together with the quantile.
For statistical purposes, our results pave the way to easier model fitting for tail risk measures via regression and the generalized method of moments, but also model comparison and model validation in terms of established backtesting procedures.

\vspace{0.5em}
\noindent
\begin{bfseries}Key-words\end{bfseries}:  
Tail risk measures,
elicitability, 
identifiability,
weighted scores,
backtesting

\vspace{0.5em}
\noindent
\begin{bfseries}MSC Classification\end{bfseries}: 
91G70;
62F07;
62G32;
62P05

\end{abstract}

\section{Introduction}\label{sec:1}

Over the past three decades, 
 tail risk measures, such as Value-at-Risk (VaR) and Expected Shortfall (ES), have been playing a prominent role 
 as the standard risk metrics  in global  banking and insurance regulatory frameworks, such as Basel III/IV \citep{BASEL19}  and Solvency II \citep{E11}. 
 Within the  general framework of tail risk measures  developed by 
\cite{LW21} and \cite{LMWW22},  there are many   tail risk measures in the literature, such as the tail standard deviation \citep{FL06},
the tail entropic risk measure \citep{T09},
the Glue VaR \citep{BGS14},
the Gini Shortfall \citep{FWZ17},   and 
the Range VaR \citep{CDS10,ELW18}, in addition to  VaR and ES. 
 
In view of the practical importance of tail risk measures, 
this paper studies their identifiability and elicitability. 
A risk measure is identifiable if it is the unique zero (root) of an expected identification function. 
Similarly, it is elicitable if it is the unique minimizer of an expected score; see Definitions \ref{def:identifiability} and \ref{def:elicitability}.
As such, identifiability and elicitability facilitate Z- and M-estimation \citep{HuberRonchetti2009}. More generally, model fitting can be pursued via the generalized method of moments \citep{NeweyMcFadden1994} or regression \citep{DFZ24}, also using recent tools from machine learning; see \cite{FMW23} and \cite{ContETAL2022} for examples.
For model comparison and model validation, commonly subsumed under the umbrella term \emph{backtesting}, identifiability and elicitability are crucial \citep{NoldeZiegel2017, FZG16, WWZ25}.

Elicitability, together with the closely related concept of identifiability, has received increasing attention in the statistics and risk management literature.  
In particular, characterization results on risk measures are obtained via elicitability or convex level sets; 
  see \cite{W06}, \cite{BB15} and \cite{DBBZ16} for convex risk measures, \cite{Z16} for coherent risk measures, \cite{KP16} and \cite{WZ15}   for distortion risk measures, \cite{WW20} for non-monotone risk measures, and \cite{LW21} for tail risk measures. 
Elicitability can also be used indirectly to identify ES among coherent risk measures, as done by \cite{EMWW21}.

A tail risk measure is determined by the tail of the loss distribution and a generating risk measure, called the generator. Our paper is motivated by the intriguing observation  that the mean is elicitable, and so is the vector of VaR and ES at the same level (\citealp{FZ16}), where ES is the tail risk measure generated by the mean. This naturally leads to the question whether the elicitability of a tail risk measure is connected to the elicitability of its generator.  
This problem, as well as its variants, will be systemically studied in this paper.

Our main results, Theorems \ref{thm:identifiability} and \ref{thm:elicitability}, establish the joint identifiability and elicitability of a tail risk measure together with the corresponding  quantile, subject to identifiability and elicitability of its generator, respectively.
For the elicitability result, a novel monotonicity condition on the consistent scoring function for the generator is required.
Interestingly, the consistent scores of the latter theorem constitute a generalization of the weighted scores discussed by \cite{HolzmannKlar2017}, but with a forecast dependent weight function.
This class nests also the well known FZ-scores for ES and the quantile, characterized by \cite{FZ16}. 
 {Vice versa,} Propositions \ref{prop:identifiability} and \ref{prop:elicitability} show that the identifiability and elicitability of the generator is necessary for the joint identifiability and elicitability results, respectively.

The paper is structured as follows. 
Section \ref{sec:2} collects preliminaries on tail risk measures. 
Section \ref{sec:cond-eli} defines the key concepts of elicitability, identifiability, and convex level sets, as well as some examples. 
Section \ref{sec:4} contains results on identifiability, and 
Section \ref{sec:5} contains results on elicitability. 
Some remarks on our results when changing from the right tail to the left tail, or to a body part of the distribution, are presented in Section \ref{sec:sign conventions}.
Section \ref{sec:6} concludes the paper. 
Some background on risk measures,
technical assumptions, and omitted proofs are collected in three appendices.

\section{Preliminaries}\label{sec:2}

 \subsection{Notation}

Let $\M^0$ be the set of all Borel probability distributions on $\R$ and let $\M^q\subseteq \M^0$, $q\in[1,\infty)$, be the subset of all distributions with a finite $q$th moment.
Moreover, $\M^\infty$ denotes the set of all compactly supported distributions. 
For the ease of presentation, we identify distributions with the corresponding cumulative distribution functions; that is, $ F(x) $ is understood as $ F((-\infty , x] ) $ for $ x \in \R $. For $ F \in \M^0 $, we define the left-continuous generalized inverse (left-quantile) as
$$
   F^{-1} (t) = \inf \{ x \in \R: F(x) \geq t \} , \quad t \in (0,1] , 
$$
and in addition set $F^{-1}(0)=\inf\{x\in \R: F(x)> 0\}$. 
For a function $F$ on a subset of $\R$, its left and right limits at $x $ are given by
$F(x-)= \lim_{y\uparrow x}F(y)$ and $F(x+)= \lim_{y\downarrow x}F(y)$, assuming that the limits are well-defined. 
For $x\in \R$, denote by $\delta_x$ the point-mass probability measure at $x$. 
For simplicity,  $\int$ means integration on $\R$ unless otherwise specified.  
Throughout this article, if we do not specify $ \M$, then the statements hold for any set $ \M \subseteq  \M^0 $ such that the functional at consideration is well-defined and finite on $\M$.

\subsection{Risk measures}

{A risk measure is a mapping 
$\rho \colon \M\to [-\infty,\infty]$, where  the set $\M\subseteq \M^0$ is the domain of $\rho$. 
The typical interpretation of a risk measure is that its argument, $F\in \M$, represents the distribution of a random loss from some financial position, and that $ \rho(F) $ represents the amount of capital required to hold to make the financial position acceptable. 
In the literature, risk measures are often defined as mappings from a set of random variables, instead of $\M$, to the extended real line.
When such risk measures are {law-based}, they one-to-one correspond to risk measures under our definition.
We explain in Appendix \ref{app:Risk measures} details on risk measures and their desirable properties.}

The two most popular classes of  {risk measure}s used in banking and insurance practice are the Value-at-Risk (VaR)  and the Expected Shortfall (ES).  
For a confidence level $p $,   the  {right-quantile}  ($\VaR^+_p$) and the  {left-quantile} ($\VaR_p^-$), are defined for $F\in\M^0$ as
\begin{align*}
\VaR^+_p(F)   &= \inf\{x\in \R: F(x)> p\}=F^{-1}(p+) ,~~~p\in [0,1);\\
\VaR^-_p(F)   &=\inf\{x\in \R: F(x)\ge p\}=F^{-1}(p)  ,~~~p\in (0,1].
\end{align*}
In addition, let $ \esssup(F)= \VaR_1^-(F) $ and $ \essinf(F) = \VaR_0^+(F)$ for $F \in \M^0$. 
In risk management practice, one typically does not distinguish between $\VaR_p^+$ and $\VaR^-_p$
as they are identical for distributions with a  continuous $ F^{-1} $ at $p$. 
However, to allow for the full generality of exposition, for $p\in(0,1)$, we shall mostly work with the interval-valued $p$-quantile
$$
Q_p(F) = \{x\in \R: F(x-)\le p \le F(x)\} = \left[\VaR_p^-(F),\VaR_p^+(F)\right].
$$
For a confidence level $ p \in (0,1) $, $\ES_p$  is defined by 
\begin{equation}
\label{eq:ES}
\ES_p(F)=\frac{1}{1-p}\int_p^1 \VaR^+_r (F)\d r \in \R\cup\{\infty\},\quad F\in \mathcal M^0 ,
\end{equation}
$ \ES_1 (F) = \VaR_1^- (F) = \esssup (F)\in \R\cup\{\infty\} $ and $\ES_0(F)$ is the mean on $\M^1$.
We remark that for $ p \in (0,1) $ and $F\in\M^0$, $\ES_p(F)$ is finite if and only if $\int_0^{\infty} x\d F(x)<\infty$.
Since $\VaR_r^-(F) \neq \VaR_r^+(F)$ only for at most countably many $r\in [0,1]$, we can also replace $\VaR^+_r (F)$ by $\VaR^-_r (F)$ in \eqref{eq:ES}.

{
As a compromise between the robustness of VaR and and the coherence of ES, \cite{CDS10} introduced the so called Range Value-at-Risk (RVaR). 
For $0< p<q< 1$, it is defined as 
\begin{equation}
    \RVaR_{p,q}(F)=\frac{1}{q-p}\int_p^q \VaR^+_r (F)\d r \in \R,\quad F\in \mathcal M^0.
\end{equation}
Clearly, for the boundary cases of $q=1$ or $p=q$, by taking limits, $\RVaR_{p,q}$ coincides with $\ES_p$ or $\VaR_p^+$, respectively.
}

\subsection{Tail distribution and tail {risk measure}s}

We follow the same definitions of tail distributions as in \cite{RU02}. 
 For a distribution $ F \in \M^0 $ and $p\in (0,1)$, let  $F_p\in \M^0$ be the
\emph{tail distribution of $F$} beyond its $p$-quantile, that is,
$$F_p (x) = \frac{(F(x) - p )_+}{1-p} , \quad x \in \R .$$ 
Invoking our sign convention that a distribution $ F \in \M^0 $ is a {loss} distribution, it is clear that the right tail is the region of interest from a risk measurement and management perspective \citep{QRM15}.
Clearly, with another sign convention and from a mathematical perspective, we could similarly consider the left tail of the distribution, and all the results presented in the article hold \emph{mutatis mutandis} (this will be made more clear in Section \ref{sec:sign conventions}).
We impose the tacit condition on our generic classes $\M\subseteq \M^0$ that 
  $ F_p\in \M$ for each $ F \in \M  $ and $p\in(0,1)$. This assumption holds for  common choices of $\M $, such as $\M=\M^q$, $q\in [1,\infty]$.\footnote{A notable exception, however, is $\M\setminus \M^\infty$.}
Note that $ F_p = G_p $ if and only if $ F(x) = G(x) $ for all $ x \geq \VaR_p^+ (F) $.

The following definition corresponds to Definition 3.1 of \cite{LW21}.
\begin{definition}[Tail risk measures] \label{def:tail risk fcn}
 For $p\in (0,1)$, a {risk measure} $\rho\colon\M\to\R$ is a \emph{$p$-tail {risk measure}}  if $\rho(F)=\rho(G)$ for all $F,G\in \M $ satisfying $F_p = G_p$. 
{A {risk measure} $\rho \colon \M\to\R$ is  
a \emph{tail {risk measure}} 
if it is a $p$-tail risk measure for some $p\in (0,1)$.}
\end{definition}

{It is immediate from Definition \ref{def:tail risk fcn} that for $p,q\in (0,1)$ where $p<q$, $\VaR^+_p$, $\ES_p$ and $\RVaR_{p,q}$ are $r$-tail risk measures for $r\in (0,p]$, but not for $r\in (p,1)$;  $\VaR^-_p$ is an $r$-tail risk measure only for $r\in (0,p)$.}
The expectation is not a tail risk measure. 
An important property of tail risk measures is that they can be generated by other risk measures. 
Precisely, for any $p\in (0,1)$ and $p$-tail risk measure $\rho$ on $ \M $, there exists a risk measure $\rho^*$ on $\M $, called the \emph{$p$-generator} of $ \rho $, satisfying 
\begin{align} \label{def:generator}
\rho(F)=\rho^*(F_p), \quad \text{ for all } F \in \M, 
\end{align}  
and such a $\rho^*$ is unique on the set $\M^{*}$ of distributions in $\M $ whose support is bounded from below (Proposition 1 of \citealp{LW21}). 
Conversely, to obtain a $p$-tail {risk measure}, $p\in(0,1)$, it suffices to specify a generic {risk measure} $\rho^*$ and to define $ \rho $ via \eqref{def:generator}.  
We will define $(\rho,\rho^*)$ as a pair, and study their joint properties.

\begin{definition}
For $p\in (0,1)$,
a pair of {risk measure}s $(\rho,\rho^*)$  is called a \emph{$p$-tail pair} on $\M$ if   
$\rho(F )=\rho^*(F_p)$ for all $F \in \M$, and 
$(\rho,\rho^*)$ is called a \emph{tail pair}  on $\M$ if it is a $p$-tail pair  on $\M$ for some $p\in (0,1)$. 
\end{definition}

A slightly different approach of measuring tail risk is through the loss part of the random variables instead of the part above a quantile level; see \cite{J02} and \cite{CDH13}. We focus on the formulation in Definition \ref{def:tail risk fcn} as it includes the most popular risk measures VaR and ES and many other examples of interest.

\section{Elicitability, identifiability and convex level sets} \label{sec:cond-eli}

In the literature, the general discussion about elicitability and its importance due to its connection to comparative backtests and forecasts can be found in \cite{LPS08},  \cite{G11}, \cite{FZ16}, and the references therein. 
In particular, \cite{NoldeZiegel2017} elaborated in detail on its relevance in backtesting, and they provide calibration tests exploiting the notion of identifiability.
More recently, \cite{LW21} provided a characterization of VaR  among tail risk measures based on  elicitability. 
Moreover, both identifiability and elicitability can be used in estimation, and particularly elicitability also in regression \citep{HuberRonchetti2009, Koenker2005, DFZ24}.

We want to define the notions of identifiability, elicitability and two convex level sets (CxLS) properties in a unified manner, that is applicable for real-valued risk measures, interval-valued quantiles, and vectors consisting of (possibly multiple) real-valued risk measures and an interval-valued quantile.
To this end, we study set-valued functionals $T\colon \M\to\P(\R^k)$, where $\P(\R^k)$ denotes the power set of $\R^k$.
Moreover, we use the tacit convention to identify $x\in\R^k$ with the singleton $\{x\}\in\P(\R^k)$, such that the definitions apply also to $\R^k$-valued functionals.

\begin{definition}\label{def:identifiability}
For a functional $T\colon \M\to\P(\R^k)$, 
a function $V\colon\R^k\times\R\to\R^k$ is an 
\emph{$\M$-identification function} for $T$ if for all $F\in\M$
\begin{align} 
\label{eq:identifiable}
T(F) \subseteq \left\{x\in\R^k: \int_{\R} V(x,y) \d F(y) =0 \right\},
\end{align}
assuming that the integral is well-defined.
If \eqref{eq:identifiable} holds with an equality, then $V$ is 
a \emph{strict $\M$-identification function} for $T$.
If $T$ has a strict $\M$-identification function, we call it \emph{$\M$-identifiable}.
\end{definition}

\begin{definition}\label{def:elicitability}
\begin{enumerate}[(i)]
\item 
For a functional $T\colon \M\to\P(\R^k)$, 
a function $S\colon \R^k\times \R\to\R$ is an \emph{$\M$-consistent score} if for all $F\in\M$
\begin{align} 
\label{eq:elicitable}
T(F) \subseteq  \argmin_{x\in \R^k} \int_{\R} S(x,y) \d F(y) ,
\end{align}
assuming that the integral is well-defined.
If \eqref{eq:elicitable} holds with an equality, then $S$ is a \emph{strictly $\M$-consistent score} for $T$.
If $T$ has a strictly $\M$-consistent score, we call it 
\emph{$\M$-elicitable}.
\item 
A functional $T\colon \M\to\P(\R^{k})$,
 is $ \mathcal{M}$-\emph{conditionally elicitable} with a functional $T'\colon \M\to\P(\R^{k'})$ if $T'$ is $\M$-elicitable and $T$ is $\M(r,T')$-elicitable for any $r\in\R^{k'}$, where
$$
\M(r,T') = \{F\in\M :  r\in T'(F)\}.
$$
\end{enumerate}
\end{definition}

 The notion of conditional elicitability simplifies the original definition given in \cite{EmmerKratzTasche2015}; see also
 \cite{FisslerHoga2023}.
 
 \begin{example}
   The mean functional is $\M^1$-identifiable with the strict $\M^1$-identification function $V(x,y) = x-y$, $x,y\in\R$. 
   Under mild conditions, any other strict $\M^1$-identification function is of the form $h(x)(x-y)$, where $h\neq0$ \citep{DFZ23}. Moreover, under similar conditions, any strictly $\M^1$-consistent scoring function for the mean is given by 
 $$
 S(x,y) = -\phi(x) + \phi'(x)(x-y) + a(y), \quad x,y\in\R,
 $$
 where $\phi\colon\R\to\R$ is strictly convex with subgradient $\phi'$, and $a\colon\R\to\R$ is an $\M$-integrable function \citep{G11}. This class nests the ubiquitous squared loss $S(x,y) = (x-y)^2$, which is strictly $\M^2$-consistent for the mean.
 \end{example}
 
\begin{example}
For $p\in(0,1)$, the $p$-quantile $Q_p$ is identifiable on 
\begin{equation*}
\M_{(p)} = \{F\in\M^0: F(\VaR_p^-(F))=p\}
\end{equation*}
with the strict $\M_{(p)}$-identification function
$
V(x,y) =  \id_{\{ y \le x \}} - p, \ x,y\in\R.
$
Essentially, all other strict $\M_{(p)}$-identification functions are given by $h(x)(\id_{\{ y \le x \}} - p)$, where $h\neq0$; see \cite{DFZ23} for details. Here, ``essentially'' means that for any other strict $\M_{(p)}$-identification function $\tilde{V}(x,y)$ it holds that $\int \tilde{V}(x,y) \d F(y)= h(x)(F(x) - p)$ for all $x\in\R$ and $F\in \M_{(p)}$.
Interestingly, $Q_p$ is elicitable on the entire $\M^0$. Subject to integrability and richness conditions, any strictly $\M^0$-consistent scoring function is given by 
$$
S(x,y) = \id_{\{y>x\}}g(y) + \big(\id_{\{ y \le x \}} - p\big)g(x)  + a(y), \quad x,y\in\R,
$$
where $g$ is strictly increasing \citep{Gneiting2011b}. 
This recovers the well-known pinball loss $S(x,y) = \big(\id_{\{ y \le x \}} - p\big)(x-y)$,
which is strictly $\M^1$-consistent for $Q_p$.
\end{example}

 \begin{example}
 It is known that $\ES_p$ generally fails to be identifiable and elicitable on sufficiently rich classes $\M$ \citep{W06, G11}.
 However, the pair $(Q_p, \ES_p)$ turns out to be $\M^1\cap \M_{(p)}$-identifiable using the strict identification function
 \begin{equation}
 \label{eq:V VaR ES}
 V(v,x,y) = \begin{pmatrix}
 \id_{\{ y \le v \}} - p \\
 x - \tfrac{1}{1-p}\big[ \id_{\{y>v\}}y +(\id_{\{ y \le v \}} - p)v\big]
 \end{pmatrix} , \quad v,x,y\in\R,
 \end{equation}
 see \cite{DFZ23} for the full class of identification functions.
 \cite{AS14} established the $\M$-elicitability of $(Q_p,\ES_p)$ under certain restrictive conditions on $\M$;
 and \cite{FZ16, FZ21} showed that the pair is generally $\M^1$-elicitable. Strictly $\M^1$-consistent scoring functions are of the form 
 \begin{align}\label{eq:S ES VaR}
          \begin{aligned} S(v,x,y) &=  \,  \id_{\{ y > v \}} g (y) + \left ( \id_{\{ y \le v \}} - p \right )g ( v)   \\ 
           & \qquad  +\phi'(x)\Big(x - \tfrac{1}{1-p}\big[ \id_{\{y>v\}}y +(\id_{\{ y \le v \}} - p)v\big]\Big) - \phi(x) + a(y) , \quad v,x,y\in\R,
           \end{aligned}
\end{align}
where $\phi\colon\R\to\R$ is strictly convex with subgradient $\phi'$, such that for all $x$ the function $v\mapsto g(v) - \frac{1}{1-p}\phi'(x)v$ is strictly increasing, and where $a\colon\R\to\R$ is a function such that the expectation $\int S(v,x,y)\d F(y)$ is finite for all $F\in\M^1$.
 \end{example}

\begin{example}
 \label{ex:RVaR}
      {Similarly to $\ES_p$, \cite{FZ21} showed that $\RVaR_{p,q}$, $0<p<q<1$, is not identifiable or elicitable on sufficiently rich classes $\M$.
     In the same vein as before, they established the $\M_{(p)}\cap\M_{(q)}$-identifiability of the triplet $(Q_p,Q_q,\RVaR_{p,q})$ via the identification function 
      \begin{equation}
 \label{eq:V RVaR}
 V(v_1,v_2,x,y) = \begin{pmatrix}
 \id_{\{ y \le v_1 \}} - p \\
 \id_{\{ y \le v_2 \}} - q\\
 x - \tfrac{1}{q-p}\Big[ \id_{\{v_1<y\le v_2\}}y 
 +(\id_{\{ y \le v_1 \}} - p)v_1 
 - (\id_{\{ y \le v_2 \}} - q)v_2\Big]
 \end{pmatrix}
 \end{equation}
 for $v_1,v_2,x,y\in\R$.
 Moreover, strictly $\M^0$-consistent scoring functions for this triplet are of the form 
 \begin{align}\label{eq:S RVaR}
          \begin{aligned} 
         & S(v_1,v_2,x,y) \\ & =  \,  \id_{\{ y > v_1 \}} g_1 (y) + \left ( \id_{\{ y \le v_1 \}} - p \right )g_1 ( v_1) +
          \id_{\{ y > v_2 \}} g_2 (y)   + \left ( \id_{\{ y \le v_2 \}} - q \right )g_2 ( v_2)  \\&  \quad ~  +\phi'(x)\Big(x - \tfrac{1}{q-p}\Big[ \id_{\{v_1<y\le v_2\}}y 
 +(\id_{\{ y \le v_1 \}} - p)v_1 
- (\id_{\{ y \le v_2 \}} - q)v_2\Big]\Big) - \phi(x) + a(y)   
           \end{aligned}
\end{align}
for 
$v_1,v_2,x,y\in\R$,
where $\phi\colon\R\to\R$ is strictly convex with subgradient $\phi'$, such that for all $x$ the functions 
$v_1\mapsto g_1(v_1) - \frac{1}{q-p}\phi'(x)v_1$ 
and 
$v_2\mapsto g_2(v_2) + \frac{1}{q-p}\phi'(x)v_2$
are strictly increasing, and where $a\colon\R\to\R$ is a function such that the expectation $\int S(v_1,v_2,x,y)\d F(y)$ is finite for all $F\in\M^0$.}
 \end{example}
 
 We close this section by providing two versions of the CxLS property, defined in \cite{FFHR2021}.
 
 \begin{definition}
 \begin{enumerate}[(i)]
\item 
 A functional $T\colon \M\to\P(\R^k)$ satisfies  the CxLS property on $\M$ if for all $F_0,F_1\in\M$ and for all $\lambda\in(0,1)$ such that $(1-\lambda)F_0+\lambda F_1 \in\M$ it holds that 
 $$
 T(F_0)\cap T(F_1) \subseteq T((1-\lambda)F_0+\lambda F_1).
 $$
 \item
  A functional $T\colon \M\to\P(\R^k)$ satisfies the CxLS* property on $\M$ if for all $F_0,F_1\in\M$ and for all $\lambda\in(0,1)$ such that $(1-\lambda)F_0+\lambda F_1 \in\M$ it holds that 
 $$
 T(F_0)\cap T(F_1) \neq \emptyset \implies  T(F_0)\cap T(F_1) =  T((1-\lambda)F_0+\lambda F_1).
 $$
 \end{enumerate}
 \end{definition}
 
 Note that \cite{FFHR2021} coined these two properties the \emph{selective} CxLS and CxLS* properties. 
 Since we do not make use of the counterpart -- the \emph{exhaustive} notion -- we omit the qualifier ``selective'' in this paper.
 
 Obviously, the CxLS* property implies the CxLS property. 
 They both generalize the CxLS property for singleton-valued functionals, for which they coincide. 
 Importantly for us, elicitability implies the CxLS* property \citep[Proposition 3.4]{FFHR2021}, and similarly, identifiability implies the CxLS property (Proposition B.4 of the preprint version of \citealp{FisslerHoga2023}). 
 This necessity has already partially been established by \cite{Osband1985} and \cite{G11}.
 Under additional regularity conditions and for real-valued functionals, \cite{SteinwartETAL2014} also showed the sufficiency of the CxLS property for elicitability and identifiability.

 \section{Identifiability results}
 \label{sec:4}
We first define the following sets:
\begin{align*}
          \M^{\rm c}_p & = \left \{ F \in \mathcal{M}^0: F^{-1} \text{ is continuous at } p \right \} ,  \quad \text{ for } p \in (0,1) ;\\
          \mathcal M_{\ge r} & =\left\{F\in \mathcal{M}^0: F^{-1}(0)\ge r\right\} , \quad \text{ for } r\in \R ; \\
          \M_{(p)} &= \left\{F\in\M^0: F(F^{-1}(p))=p\right\},  \quad \text{ for } p \in [0,1).
\end{align*}

The first result in this section discusses how the identifiability of the tail risk measure implies that of the corresponding generator.

\begin{proposition}
\label{prop:identifiability}
Let $(\rho,\rho^*)$ be a $p$-tail pair for some $p\in(0,1)$. 
The following statements hold.
\begin{enumerate}[(i)]
\item 
For any $r\in\R$ it holds that if $\rho$  is $\M\cup\{(1-p)G + p\delta_r : G\in\M\}$-identifiable with a strict identification function $V$, then $\rho^*$ is $\M_{\ge r}\cap \M$-identifiable with the strict identification function 
\begin{equation}
\label{eq:V*}
V_r^*(x,y) = (1-p)V(x,y) + pV(x,r), \quad x,y\in\R.
\end{equation}
\item
For any $r\in\R$ it holds that if $(Q_p, \rho)$ is $\M_{(p)}\cap \big(\M\cup\{(1-p)G + p\delta_r : G\in\M\}\big)$-identifiable with a strict identification function
$$\big(\id_{\{y\le v\}} - p,
V(v,x,y)\big),~~~~v,x,y\in\R,$$
for some function $V\colon\R^3\to\R$, 
then $\rho^*$ is $\M_{\ge r}\cap \M_{(0)}\cap \M$-identifiable with the strict identification function
\begin{equation}
\label{eq:V*2}
V_r^*(x,y) = (1-p) V(r,x,y) + pV(r,x,r), \quad x,y\in\R.
\end{equation}
 \end{enumerate}
\end{proposition}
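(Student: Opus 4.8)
\emph{The plan.} Both parts follow the same blueprint: realize an admissible argument $G$ of $\rho^*$ as the tail distribution of an explicit mixture, and pull back the (strict) identification property of $\rho$, respectively of $(Q_p,\rho)$, along this construction. Fix $r\in\R$ and, for $G$ in the relevant class, set $F:=(1-p)G+p\delta_r$. The key observation is that $F_p=G$ whenever $G\in\M_{\ge r}$: the condition $G^{-1}(0)\ge r$ forces $G(x)=0$ for $x<r$, hence $F(x)=0$ for $x<r$ and $F(x)=(1-p)G(x)+p$ for $x\ge r$; substituting into $F_p(x)=(F(x)-p)_+/(1-p)$ gives $F_p(x)=G(x)$ on both ranges. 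The tail-pair identity then yields $\rho(F)=\rho^*(F_p)=\rho^*(G)$, which pins down the target value in both parts.

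\emph{Part (i).} Here $F=(1-p)G+p\delta_r$ belongs to $\{(1-p)G'+p\delta_r:G'\in\M\}$ (as $G\in\M_{\ge r}\cap\M\subseteq\M$), so the strict identification property of $V$ for $\rho$ applies at $F$. Expanding the expectation along the mixture,
\[
\int V_r^*(x,y)\,\d G(y)=(1-p)\int V(x,y)\,\d G(y)+pV(x,r)=\int V(x,y)\,\d F(y),
\]
so the zero set of $x\mapsto\int V_r^*(x,y)\,\d G(y)$ coincides with the zero set of $x\mapsto\int V(x,y)\,\d F(y)$, which by strictness equals $\{\rho(F)\}=\{\rho^*(G)\}$. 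This is exactly the assertion that $V_r^*$ strictly identifies $\rho^*$ on $\M_{\ge r}\cap\M$, and the same expansion shows $\int V_r^*(x,\cdot)\,\d G$ is well-defined whenever $\int V(x,\cdot)\,\d F$ is.

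\emph{Part (ii).} The additional work is to verify that $F$ lies in $\M_{(p)}\cap\big(\M\cup\{(1-p)G'+p\delta_r:G'\in\M\}\big)$ and that $r\in Q_p(F)$; this is where the hypothesis $G\in\M_{(0)}$ is used. Since $G^{-1}(0)\ge r$ and $G(G^{-1}(0))=0$ (by $G\in\M_{(0)}$), one has $G(r)=0$, whence $F(x)=(1-p)G(x)=0<p$ for $x<r$ while $F(r)=(1-p)G(r)+p=p$; therefore $F^{-1}(p)=r$, which gives both $F(F^{-1}(p))=p$ (i.e.\ $F\in\M_{(p)}$) and $r=\VaR_p^-(F)\in Q_p(F)$. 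Invoking the strict identification property of $\big(\id_{\{y\le v\}}-p,\,V(v,x,y)\big)$ for $(Q_p,\rho)$ at $F$ and specializing the first coordinate to $v=r$: because $r\in Q_p(F)$, the value $x=\rho(F)$ is the unique solution of $\int V(r,x,y)\,\d F(y)=0$. Expanding along the mixture as before, $\int V_r^*(x,y)\,\d G(y)=(1-p)\int V(r,x,y)\,\d G(y)+pV(r,x,r)=\int V(r,x,y)\,\d F(y)$, so $\rho^*(G)$ is the unique zero of $x\mapsto\int V_r^*(x,y)\,\d G(y)$, as claimed.

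\emph{Main obstacle.} The argument is conceptually light; the only genuinely careful steps are the bookkeeping of which membership conditions are needed where ($\M_{\ge r}$ to obtain $F_p=G$, and additionally $\M_{(0)}$ to obtain $F\in\M_{(p)}$ and $r\in Q_p(F)$), and the consistent use of \emph{strictness} of the identification functions throughout, so that each pulled-back statement is an equality of zero sets rather than a mere inclusion. Well-definedness of all integrals comes for free from the mixture expansion.
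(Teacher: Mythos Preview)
Your proof is correct and follows essentially the same approach as the paper: construct $F=(1-p)G+p\delta_r$, verify $F_p=G$, and pull back the strict identification property along the mixture expansion $\int V_r^*(x,y)\,\d G(y)=\int V(x,y)\,\d F(y)$ (respectively $\int V(r,x,y)\,\d F(y)$ in part (ii)). If anything, you are more explicit than the paper in justifying why $G\in\M_{(0)}$ is needed to secure $G(r)=0$, and hence $F(r)=p$, $F^{-1}(p)=r$, and $F\in\M_{(p)}$; the paper simply asserts $r=\VaR_p^-(F)\in Q_p(F)$ without spelling this out.
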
 
\begin{proof}
\begin{enumerate}[(i)]
\item
Let $V$
be a strict $\M\cup\{(1-p)G + p\delta_r : G\in\M\}$-identification function for $\rho$ and $V_r^*$ as given in \eqref{eq:V*} for some $r\in\R$.
Choose some $G\in\M_{\ge r}\cap \M$ and define 
$$F(y) = \id_{\{y\ge r\}}\big((1-p)G(y)+p\big) = (1-p) G(y) + p\id_{\{y\ge r\}}  .$$
Due to the assumed identifiability of $\rho$ and since $F_p = G$, we get
\begin{align*}
\rho^*(G) = \rho^*(F_p) = \rho(F) 
&= \left \{x\in\R : \int V(x,y) \d F(y) = 0\right \}\\
&= \left \{x\in\R : \int V(x,y) \d \left ( \id_{\{y\ge r\}}(1-p)G(y) \right ) + pV(x,r) = 0\right \}\\
&= \left \{x\in\R : \int \big ( (1-p) V(x,y) + pV(x,r) \big ) \d G(y) = 0\right \}\\
&= \left \{x\in\R : \int V^*_r(x,y) \d G(y) = 0\right \},
\end{align*}
where $V^*_r(x,y)$ is defined in \eqref{eq:V*}.  
This shows the claim.

\item
Let $\big(\id_{\{y\le v\}} - p,
V(v,x,y)\big)$, $v,x,y\in\R$, be a strict $\M_{(p)}\cap \big(\M\cup\{(1-p)G + p\delta_r : G\in\M\}\big)$-identification function for $(Q_p,\rho)$ and $V_r^*$ given in \eqref{eq:V*2} for some $r\in\R$.
For any $F\in \M_{(p)}\cap \big(\M\cup\{(1-p)G + p\delta_r : G\in\M\}\big)$ and $v\in Q_p(F)$ it holds that 
$$
\int V(v,x,y)\d F(y)=0 \quad \text{ if and only if } \quad x=\rho(F).
$$
Let $G\in\M_{\ge r}\cap \M_{(0)}\cap \M$. Then, as in part (i), define $F(y) = \id_{\{y\ge r\}}\big((1-p)G(y) + p\big)$, resulting in $F\in \M_{(p)}\cap \big(\M\cup\{(1-p)G + p\delta_r : G\in\M\}\big)$ and $r = \VaR_p^-(F)\in Q_p(F)$. 
Again, $F_p = G$, which implies 
\begin{align*}
\rho^*(G) = \rho^*(F_p) = \rho(F) 
&= \left \{x\in\R : \int V(r,x,y) \d F(y) = 0\right \}\\
&= \left\{x\in\R : \int V(r,x,y) \d \left (\id_{\{x\ge r\}}(1-p)G(x)\right ) + pV(r,x,r) = 0\right \}\\
&= \left \{x\in\R : \int \big((1-p) V(r,x,y) + pV(r,x,r)\big ) \d G(x) = 0\right \},
\end{align*}
which shows the claim. \qedhere
\end{enumerate}
\end{proof}

\begin{example}\label{exmp:identifiability}
We illustrate the part (i) of Proposition \ref{prop:identifiability}.
Let $\M$ be the class of continuous distribution functions which are strictly increasing on their support.
For any $\alpha\in(0,1)$, $p\in(0,\alpha]$ and $r\in\R$,
$\VaR_\alpha^-$ is $\M\cup\{(1-p)G + p\delta_r : G\in\M\}$-identifiable with the strict identification function $V(x,y) = \id_{\{y\le x\}} - \alpha$. Moreover, $\VaR_\alpha^-$ is a $p$-tail risk measure with generator $\VaR_{\alpha^*}^-$, where $\alpha^* = (\alpha-p)/(1-p)$. 
Then,  \eqref{eq:V*} yields $V_r^*(x,y) = (1-p)\big(\id_{\{y\le x\}} - \alpha\big) + p \big(\id_{\{y\le r\}} - \alpha\big)$. 
Indeed, for any $F\in\M_{\ge r} \cap \M$ it holds that 
$$
\int V_r^*(x,y) \d F(y) = (1-p)(F(x) - \alpha) + p(1-\alpha) = 0 \quad \text{ if and only if } \quad F(x) = \frac{\alpha - p}{1-p} = \alpha^*,
$$
showing that $V_r^*$ is a strict $\M_{\ge r}\cap\M$-identification function for $\VaR_{\alpha^*}^-$.
\end{example}

The next result addresses the inverse direction of Proposition \ref{prop:identifiability}, that is, how the identifiability of $\rho^*$ implies that of $\rho$, together with the quantile interval. 
\begin{theorem}
\label{thm:identifiability}
Let $(\rho,\rho^*)$ be a $p$-tail pair for some $p\in(0,1)$.
If $\rho^*\colon\M\to\R$ is $\M$-identifiable with the strict identification function $V^*\colon \R\times\R\to\R$, then $(Q_p,\rho)$ is $\M_{(p)}\cap\M$-identifiable with the strict $\M_{(p)}\cap\M$-identification function 
\begin{equation}
\label{eq:identification function}
V(v,x,y) = \begin{pmatrix}
 \id_{\{ y \le v \}} - p \\
\id_{\{y>v\}}V^*(x,y) +(\id_{\{ y \le v \}} - p)V^*(x,v)
 \end{pmatrix}, \quad v,x,y\in\R.
\end{equation}
\end{theorem}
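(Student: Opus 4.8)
The plan is to fix $F\in\M_{(p)}\cap\M$ arbitrarily and verify directly that the zero set of the proposed $V$ against $F$ equals $Q_p(F)\times\{\rho(F)\}$. Three ingredients are available throughout: $F_p\in\M$ by the standing assumption that $\M$ is closed under taking $p$-tails; $\rho(F)=\rho^*(F_p)$ by the $p$-tail-pair property; and $\rho^*(F_p)$ is the \emph{unique} zero of $x\mapsto\int V^*(x,y)\,\d F_p(y)$ because $V^*$ strictly $\M$-identifies $\rho^*$ (which also secures well-definedness of the integrals below).

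I would first dispose of the quantile coordinate using the quantile example of Section~\ref{sec:cond-eli}: its contribution integrates to $\int(\id_{\{y\le v\}}-p)\,\d F(y)=F(v)-p$, and for $F\in\M_{(p)}$ the set $\{v\in\R:F(v)=p\}$ equals $Q_p(F)$. Hence for $v\notin Q_p(F)$ the first coordinate of $\int V(v,x,y)\,\d F(y)$ is nonzero for every $x$, so it remains to show that for each $v\in Q_p(F)$ the second coordinate,
$$\int\Big(\id_{\{y>v\}}V^*(x,y)+(\id_{\{y\le v\}}-p)V^*(x,v)\Big)\,\d F(y),$$
vanishes if and only if $x=\rho(F)$.

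The crux is the identity of Borel measures on $\R$
$$\id_{(v,\infty)}\,\d F+(F(v)-p)\,\delta_v=(1-p)\,\d F_p,\qquad v\in Q_p(F),$$
which I would establish by comparing the two sides on $(-\infty,\VaR_p^+(F))$, on the singleton $\{\VaR_p^+(F)\}$, and on $(\VaR_p^+(F),\infty)$, using that $F$ is constant equal to $p$ on $[\VaR_p^-(F),\VaR_p^+(F))$, the defining relation $F_p=(F-p)_+/(1-p)$, and the consequent left limit $F(\VaR_p^+(F)-)=p$ (in the degenerate case $Q_p(F)=\{\VaR_p^+(F)\}$ one has $F(\VaR_p^+(F))=p$ and the identity reduces to $\id_{(\VaR_p^+(F),\infty)}\,\d F=(1-p)\,\d F_p$, which is immediate). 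Integrating $V^*(x,\cdot)$ against this identity rewrites the second coordinate above as $(1-p)\int V^*(x,y)\,\d F_p(y)$, which vanishes precisely when $x=\rho^*(F_p)=\rho(F)$. Together with the previous step this identifies the zero set of $V$ against $F$ as $Q_p(F)\times\{\rho(F)\}=(Q_p,\rho)(F)$, proving the assertion.

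I expect the only genuinely delicate point to be the measure identity at the right endpoint $\VaR_p^+(F)$: one must track how a possible atom of $F$ there splits into a part at level $p$, which is invisible to $F_p$, and an excess above $p$, which becomes the atom $(F(\VaR_p^+(F))-p)_+/(1-p)$ of $F_p$, and check that the summand $(\id_{\{y\le v\}}-p)V^*(x,v)$ supplies exactly the latter for every $v\in Q_p(F)$, including the endpoint $v=\VaR_p^+(F)$ itself. Everywhere else the identity is the routine scaling $\d F_p=0$ to the left of $\VaR_p^+(F)$ and $\d F_p=\tfrac{1}{1-p}\,\d F$ to its right, so no further subtlety arises.
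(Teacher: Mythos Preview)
Your proposal is correct and takes essentially the same approach as the paper. The paper's argument is terser: for $v\in Q_p(F)$ with $F\in\M_{(p)}$ it simply notes that the correction term integrates to $(F(v)-p)V^*(x,v)=0$ and then writes $\int\id_{\{y>v\}}V^*(x,y)\,\d F(y)=(1-p)\int V^*(x,y)\,\d F_p(y)$ directly, which your measure identity $\id_{(v,\infty)}\,\d F+(F(v)-p)\,\delta_v=(1-p)\,\d F_p$ packages into a single step (and makes the endpoint atom you flag more transparent).
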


\begin{proof}
Let $F\in \M_{(p)}\cap\M$. 
Clearly, $\int \left( \id_{\{ y \le v \}} - p\right ) \d F(y) = 0$ 
if and only if $v\in Q_p(F)$. 
Take $v\in Q_p(F)$. Since $F_p(y) = \frac{(F(y)-p)^+}{1-p}$, we have
\begin{align*}
 \int \left(\id_{\{y>v\}}V^*(x,y) +(\id_{\{ y \le v \}} - p)V^*(x,v)\right ) \d F(y)
&= \int \id_{\{y>v\}}V^*(x,y)  \d F(y)\\
&= (1-p)\int V^*(x,y)  \d F_p(y).
\end{align*}
Hence, $ \int V^*(x,y)  \d F_p(y) = 0 $ if and only if $ x = \rho^*(F_p) = \rho(F)$. 
This shows the claim. 
\end{proof}

\begin{remark}\label{rem:identifiability}
Theorem \ref{thm:identifiability} holds also by using 
\begin{equation}
\label{eq:identification function2}
V(v,x,y) = \begin{pmatrix}
 \id_{\{ y \le v \}} - p \\
\id_{\{y>v\}}V^*(x,y) 
 \end{pmatrix}, \quad v,x,y\in\R
\end{equation}
in \eqref{eq:identification function}.
The reason for the seemingly more involved representation in \eqref{eq:identification function} including the ``correction term'' $(\id_{\{ y \le v \}} - p)V^*(x,v)$ is the structural similarity to the form of the consistent scoring functions in Theorem \ref{thm:elicitability}; see the discussion in Remark \ref{rem:correction term}.
\end{remark}

\begin{example}
To illustrate Theorem \ref{thm:identifiability}, we let $\rho^*$ be the expectation with a strict $\M^{1}$-identification function $V^*(x,y) = x-y$, $x,y\in\R$. 
The corresponding $p$-tail risk measure $\rho$ induced by $\rho^*$ is $\ES_p$. 
Indeed, a straightforward calculation yields that the second component of \eqref{eq:identification function} is $(1-p)$ times the second component of \eqref{eq:V VaR ES}.
\end{example}

\begin{example}
\label{ex:identifiability RVaR}
  
{  We know that $\RVaR_{p,q}$, $0<p<q<1$, is an $r$-tail risk measure for $r\in(0,p]$. For $r<p$, the generator $\rho^*$ is again an RVaR. For $r=p$, the generator $\rho^*$ is a lower expected shortfall. 
    In both cases, $\rho^*$ fails to be identifiable on reasonably large classes $\M$. 
    As a consequence of Proposition \ref{prop:identifiability}, $(Q_r,\RVaR_{p,q})$, $0<r\le p<q<1$ fails to be identifiable on reasonably large classes.
}
 
\end{example}

\begin{corollary}
\label{cor:CxLS}
Let $(\rho,\rho^*)$ be a $p$-tail pair for some $p\in(0,1)$.
If $\rho^*\colon\M\to\R$ is $\M$-identifiable,
then $(Q_p,\rho)$ satisfies the CxLS* property on $\M_{(p)}\cap\M$.
\end{corollary}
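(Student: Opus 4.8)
The plan is to take the easy half of CxLS* directly from Theorem~\ref{thm:identifiability} and to prove the additional content coordinate by coordinate. By Theorem~\ref{thm:identifiability}, the functional $T:=(Q_p,\rho)$ is $\M_{(p)}\cap\M$-identifiable with the strict identification function $V$ from \eqref{eq:identification function}; since strict identifiability yields only the CxLS property in general, a little more is needed here. Fix $F_0,F_1\in\M_{(p)}\cap\M$ and $\lambda\in(0,1)$ with $F_\lambda:=(1-\lambda)F_0+\lambda F_1\in\M_{(p)}\cap\M$, and suppose $T(F_0)\cap T(F_1)\neq\emptyset$. Since $T(F_i)=Q_p(F_i)\times\{\rho(F_i)\}$, this yields a point $v^\ast\in Q_p(F_0)\cap Q_p(F_1)$ and forces $\rho(F_0)=\rho(F_1)=:x^\ast$. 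The inclusion $T(F_0)\cap T(F_1)\subseteq T(F_\lambda)$ is then immediate: if $(v,x)\in T(F_0)\cap T(F_1)$ then $\int V(v,x,y)\d F_i(y)=0$ for $i=0,1$, hence $\int V(v,x,y)\d F_\lambda(y)=0$ by linearity, i.e., $(v,x)\in T(F_\lambda)$. It remains to prove $T(F_\lambda)\subseteq T(F_0)\cap T(F_1)$, which splits into $Q_p(F_\lambda)\subseteq Q_p(F_0)\cap Q_p(F_1)$ and $\rho(F_\lambda)=x^\ast$.

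For the quantile coordinate I would argue directly from $v\in Q_p(G)\iff G(v-)\le p\le G(v)$, distinguishing cases on a given $v\in Q_p(F_\lambda)$. If $v>v^\ast$, monotonicity and $v^\ast\in Q_p(F_i)$ give $F_i(v-)\ge F_i(v^\ast)\ge p$ for $i\in\{0,1\}$, while $F_\lambda(v-)=(1-\lambda)F_0(v-)+\lambda F_1(v-)\le p$; a convex combination of numbers $\ge p$ that is itself $\le p$ must equal $p$ termwise, so $F_0(v-)=F_1(v-)=p$ and thus $v\in Q_p(F_0)\cap Q_p(F_1)$. The case $v<v^\ast$ is symmetric (using $F_\lambda(v)\ge p$ and $F_i(v)\le F_i(v^\ast-)\le p$), and $v=v^\ast$ is trivial. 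Together with the easy inclusion above this gives $Q_p(F_\lambda)=Q_p(F_0)\cap Q_p(F_1)$.

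For the $\rho$ coordinate the key step — and, I expect, the one needing most care — is that the tail map is \emph{affine at the level $v^\ast$}: $(F_\lambda)_p=(1-\lambda)(F_0)_p+\lambda(F_1)_p$. Since $v^\ast$ lies in all three $p$-quantile sets (in $Q_p(F_\lambda)$ by the easy inclusion above), for $y\ge v^\ast$ each of $F_0(y),F_1(y),F_\lambda(y)$ is $\ge p$, so the positive part in $G_p(y)=(G(y)-p)_+/(1-p)$ is inactive and linearity of $G\mapsto G(y)$ applies; for $y<v^\ast$ each of $F_0(y),F_1(y),F_\lambda(y)$ is $\le$ the corresponding left limit at $v^\ast$, hence $\le p$, so $G_p(y)=0$ for all three. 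Thus the identity holds pointwise.

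To finish, let $V^\ast$ be the strict identification function of $\rho^\ast$. From $\rho(F_i)=\rho^\ast((F_i)_p)=x^\ast$ we get $\int V^\ast(x^\ast,y)\d(F_i)_p(y)=0$ for $i=0,1$, and the affineness identity together with linearity give $\int V^\ast(x^\ast,y)\d(F_\lambda)_p(y)=0$. Since $(F_\lambda)_p\in\M$ (because $F_\lambda\in\M$) and $V^\ast$ is a \emph{strict} identification function for the real-valued $\rho^\ast$, its zero is unique, so $\rho(F_\lambda)=\rho^\ast((F_\lambda)_p)=x^\ast$. Combining the two coordinates, $T(F_\lambda)=Q_p(F_\lambda)\times\{\rho(F_\lambda)\}=\big(Q_p(F_0)\cap Q_p(F_1)\big)\times\{x^\ast\}=T(F_0)\cap T(F_1)$, which is precisely the CxLS* property on $\M_{(p)}\cap\M$. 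Throughout, the only delicate bookkeeping is distinguishing a cdf's value at $v^\ast$ from its left limit there, and checking that the tail distributions $(F_0)_p,(F_1)_p,(F_\lambda)_p$ all belong to $\M$ so that the identifiability hypothesis on $\rho^\ast$ is available.
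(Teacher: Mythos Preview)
Your proof is correct, but it takes a different route from the paper's. The paper argues in three quick strokes: Theorem~\ref{thm:identifiability} yields identifiability of $(Q_p,\rho)$, and a cited external result (Proposition~B.4 of the preprint of \cite{FisslerHoga2023}) converts this into the CxLS property; then the $\M_{(p)}$-elicitability of $Q_p$ gives the CxLS* property for the quantile coordinate; finally, since $\rho$ is singleton-valued, CxLS for the pair together with CxLS* for $Q_p$ upgrades to CxLS* for $(Q_p,\rho)$. Your argument is instead fully self-contained: you prove the quantile-coordinate equality $Q_p(F_\lambda)=Q_p(F_0)\cap Q_p(F_1)$ by an elementary convex-combination case split, and you handle the $\rho$-coordinate via the affineness of the tail map $F\mapsto F_p$ at a shared $p$-quantile and the strict identifiability of $\rho^*$. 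The paper's approach is shorter and shows how the corollary sits within a general framework; yours avoids external citations and makes every step explicit, at the cost of more bookkeeping.

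One efficiency remark: your entire tail-affineness argument for $\rho(F_\lambda)=x^\ast$ is redundant. Once you have the easy inclusion $T(F_0)\cap T(F_1)\subseteq T(F_\lambda)$, the point $(v^\ast,x^\ast)$ lies in $T(F_\lambda)=Q_p(F_\lambda)\times\{\rho(F_\lambda)\}$, which \emph{immediately} forces $\rho(F_\lambda)=x^\ast$ because $\rho$ is single-valued. (You even use $v^\ast\in Q_p(F_\lambda)$ from this inclusion later, so you are already aware of it.) The only genuinely new work needed for the reverse inclusion is the quantile part $Q_p(F_\lambda)\subseteq Q_p(F_0)\cap Q_p(F_1)$, which you do cleanly. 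This is exactly the paper's observation that the singleton nature of $\rho$ plus CxLS for the pair reduces the CxLS* claim to CxLS* for $Q_p$ alone.
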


\begin{proof}
Due to Proposition B.4 of the preprint version of \cite{FisslerHoga2023}, Theorem \ref{thm:identifiability} implies that $(Q_p,\rho)$ satisfies the CxLS property on $\M_{(p)}\cap\M$. Since $Q_p$ is $\M_{(p)}$-elicitable, it satisfies the CxLS* property on $\M_{(p)}$. And since $\rho$ can be identified with a singleton-valued functional, $(Q_p,\rho)$ satisfies the CxLS* property on $\M_{(p)}\cap\M$.
\end{proof}

\begin{remark}
\label{rem:equivalence}
Due to Corollary 10 of \cite{SteinwartETAL2014}, under some continuity assumptions on $\rho^*$ and richness assumptions on $\M$, the risk measure $\rho^*$ is $\M$-identifiable if and only if it satisfies the CxLS property on $\M$, which is in turn equivalent to its $\M$-elicitability. 
\end{remark}

\section{Elicitability results}
\label{sec:5}

\subsection{Elicitability relations and score functions}
Corollary \ref{cor:CxLS} and Remark \ref{rem:equivalence} establish that for a $p$-tail pair $(\rho, \rho^*)$ with an $\M$-elicitable $\rho^*$, the pair $(Q_p,\rho)$ satisfies the CxLS* property on $\M_{(p)}\cap\M$, which is an important necessary condition for the elicitability of $(Q_p,\rho)$.
We emphasize that for multivariate functionals, the CxLS* property fails to be a sufficient condition for elicitability, as shown by \cite{FFHR2021} and \cite{FisslerHoga2023}.
This section first establishes the conditional $\M$-elicitability of $(Q_p,\rho)$ and then provides a simple sufficient condition for its $\M$-elicitability. 
The result below establishes the strictly consistent score function for $\rho^*$ from that of $\rho$ or $(Q_p,\rho)$, if it exists, respectively.

\begin{proposition}
\label{prop:elicitability}
Let $(\rho,\rho^*)$ be a $p$-tail pair for some $p\in(0,1)$. 
The following statements hold.
\begin{enumerate}[(i)]
\item 
For any $r\in\R$ it holds that if 
$\rho$  is $\M\cup\{(1-p)G + p\delta_r : G\in\M\}$-elicitable with a strictly consistent score $S(x,y)$, $x,y\in\R$, then $\rho^*$ is $\M_{\ge r}\cap \M$-elicitable with the strictly consistent score 
\begin{equation}
\label{eq:S*}
S_r^*(x,y) = (1-p)S(x,y) + p S(x,r), \quad x,y\in\R.
\end{equation}
\item
For any $r\in\R$ it holds that if 
$(Q_p, \rho)$ is $\M\cup\{(1-p)G + p\delta_r : G\in\M\}$-elicitable with a strictly consistent score 
$S(v,x,y)$, $v,x,y\in\R$,
then $\rho^*$ is $\M_{\ge r}\cap \M$-elicitable with the strictly consistent score
\begin{equation}
\label{eq:S*2}
S_r^*(x,y) = (1-p) S(r,x,y) + pS(r,x,r), \quad x,y\in\R.
\end{equation}
 \end{enumerate}
\end{proposition}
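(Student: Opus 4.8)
The plan is to re-run the mixture argument from the proof of Proposition~\ref{prop:identifiability}, with identification functions replaced by scores and zeros of expected identification functions by minimizers of expected scores. For part~(i), fix $r\in\R$ and let $S\colon\R\times\R\to\R$ be strictly $\big(\M\cup\{(1-p)G+p\delta_r:G\in\M\}\big)$-consistent for $\rho$. Pick an arbitrary $G\in\M_{\ge r}\cap\M$ and set
$$F(y)=\id_{\{y\ge r\}}\big((1-p)G(y)+p\big)=(1-p)G(y)+p\,\id_{\{y\ge r\}},$$
i.e.\ $F=(1-p)G+p\delta_r$. Exactly as in the proof of Proposition~\ref{prop:identifiability}(i), $F$ lies in the class on which $S$ is strictly consistent and, since $G\in\M_{\ge r}$ puts no mass strictly below $r$, one has $F_p=G$. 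As $F$ is a two-point mixture,
$$\int S(x,y)\d F(y)=(1-p)\int S(x,y)\d G(y)+p\,S(x,r)=\int S_r^*(x,y)\d G(y),$$
where $S_r^*$ is the score in~\eqref{eq:S*} (carrying a factor $p$ just as the identification function~\eqref{eq:V*}). Taking $\argmin_{x\in\R}$ at the two ends of this chain of equalities, and using that $S$ is strictly consistent for $\rho$ together with $\rho(F)=\rho^*(F_p)=\rho^*(G)$, yields $\argmin_{x\in\R}\int S_r^*(x,y)\d G(y)=\{\rho^*(G)\}$; since $G\in\M_{\ge r}\cap\M$ was arbitrary, $S_r^*$ is strictly $\M_{\ge r}\cap\M$-consistent for $\rho^*$.

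For part~(ii), let $S(v,x,y)$ be strictly consistent for the pair $(Q_p,\rho)$ on the enlarged class, let $G\in\M_{\ge r}\cap\M$, and use the same mixture $F=(1-p)G+p\delta_r$, so that again $F_p=G$; moreover $F(r-)=(1-p)G(r-)=0\le p\le F(r)$, hence $r\in Q_p(F)$ (in fact $r=\VaR_p^-(F)$). Strict consistency gives $\argmin_{(v,x)\in\R^2}\int S(v,x,y)\d F(y)=Q_p(F)\times\{\rho(F)\}$. The one extra step relative to part~(i) is to profile out the quantile coordinate: I would show that, freezing $v=r$, the map $x\mapsto\int S(r,x,y)\d F(y)$ is minimized exactly at $x=\rho(F)$. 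Indeed, $(r,\rho(F))$ already attains the global minimum of $\int S(v,x,y)\d F(y)$, so this global minimum equals $\min_{x\in\R}\int S(r,x,y)\d F(y)$; and if some $x'$ also attains it, then $(r,x')$ lies in the global argmin $Q_p(F)\times\{\rho(F)\}$, forcing $x'=\rho(F)$. Combining this with the two-point mixture identity
$$\int S(r,x,y)\d F(y)=(1-p)\int S(r,x,y)\d G(y)+p\,S(r,x,r)=\int S_r^*(x,y)\d G(y),$$
with $S_r^*$ as in \eqref{eq:S*2}, and with $\rho(F)=\rho^*(G)$, we obtain $\argmin_{x\in\R}\int S_r^*(x,y)\d G(y)=\{\rho^*(G)\}$, so that $S_r^*$ is strictly $\M_{\ge r}\cap\M$-consistent for $\rho^*$.

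I expect the profiling step in part~(ii) to be the only genuinely delicate point: one must use strict consistency of the vector-valued score $S$ carefully and exploit that $r$ is a genuine element of the interval $Q_p(F)$, so as to exclude spurious joint minimizers $(v,x)$ with $v\neq r$ or $x\neq\rho(F)$. Note that, in contrast to the identifiability analogue in Proposition~\ref{prop:identifiability}(ii), no restriction to $\M_{(p)}$ or $\M_{(0)}$ is needed here, precisely because $Q_p$ is elicitable --- not merely identifiable --- on all of $\M^0$, so there is no constraint forcing $F(r)=p$. Everything else is routine bookkeeping already performed in the proof of Proposition~\ref{prop:identifiability}: verifying $F_p=G$, that $F$ lies in the stated domain, and that the relevant expectations are well-defined.
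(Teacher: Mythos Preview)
Your proposal is correct and follows essentially the same mixture argument as the paper's proof. You are in fact slightly more careful in part~(ii): the paper simply asserts that for $v\in Q_p(F)$ one has $\rho(F)=\argmin_{x\in\R}\int S(v,x,y)\d F(y)$, whereas you spell out the profiling argument explicitly; and you correctly observe that the formula~\eqref{eq:S*} in the statement should carry a factor~$p$ in front of $S(x,r)$, consistent with~\eqref{eq:V*} and with the computation in the paper's own proof.
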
 
The proof is very similar to that  of Proposition \ref{prop:identifiability} and is presented in Appendix \ref{Proof}.

\begin{example}[Continuation of Example \ref{exmp:identifiability}]\label{exmp:elicitability1}
We now illustrate part (i) of Proposition \ref{prop:elicitability}.
A strictly $\M^0$-consistent score for $\VaR_\alpha^-$ is of the form $S(x,y) = \big(\id_{\{y\le x\}} - \alpha\big)\big(g(x) - g(y)\big) $ for a strictly increasing and bounded function $g$ \citep{Gneiting2011b} (upon restricting the class $\M\subseteq \M^0$, we can also choose unbounded $g$). 
For any $p\in(0,\alpha]$ and $r\in\R$
it holds for any random variable $Y$ with distribution $F\in\M_{\ge r}$
that the score $S_r^*$ in \eqref{eq:S*} 
 is almost surely
$$
S^*_r(x,Y) = (1-p)\big(\id_{\{Y\le x\}} - \alpha\big)\big(g(x) - g(Y)\big) + p\big(1 - \alpha\big)\big(g(x) - g(Y)\big), \quad x\in\R.
$$
A direct calculation verifies that this is a positive multiple of 
$$
\big(\id_{\{Y\le x\}} - \alpha^*\big)\big(g(x) - g(Y)\big), \quad \alpha^* = \frac{\alpha - p}{1-p}.
$$
\end{example}

The next result provides  the score function for the tail risk measure $\rho$ from that of $\rho^*$ under some conditions.
A list of technical assumptions adapted from those used by  \cite{FZ16} is needed for   part (iii). Due to its length, this list is presented in Appendix \ref{sec:Assumptions}.

\begin{theorem}
\label{thm:elicitability}
Let $(\rho,\rho^*)$ be a $p$-tail pair for some $p\in(0,1)$.
If $\rho^*\colon\M\to\R$ is $\M$-elicitable with the strictly $\M$-consistent score $S^*\colon \R\times\R\to\R$.
Then the following statements hold.
\begin{enumerate}[(i)]
\item
$\rho$ is $\M$-conditionally elicitable with $Q_p$. In particular, for any $v\in\R$, the score 
\begin{align}
\label{eq:S general 1}
S_v(x,y) = \id_{\{y>v\}} S^*(x,y) +\big(\id_{\{y\le v\}} - p\big) S^*(x,v) +a(y), \quad x,y\in\R,
\end{align}
is strictly $\M(v,Q_p) $-consistent for $\rho$, where $\M(v,Q_p) = \{F\in\M: v\in Q_p(F)\}$.
\item
If for all $x\in\R$, $S^*(x,y)$ is strictly increasing in $y$, then $(Q_p,\rho)$ is $\M$-elicitable with a strictly $\M$-consistent score 
\begin{align}
\label{eq:S general}
S(v,x,y) = \id_{\{y>v\}} S^*(x,y) +\big(\id_{\{y\le v\}} - p\big) S^*(x,v) +a(y), \quad v,x,y\in\R,
\end{align}
 where $a\colon\R\to\R$ is some $\M$-integrable function.
\item
Suppose $\M\subseteq \M_{p}^{\rm c}\cap \M_{(p)}$ and  Assumption \ref{ass:main thm} in Appendix \ref{sec:Assumptions} holds.
Any (strictly) $\M$-consistent score $S\colon\R^3\to\R$ for $(Q_p, \rho)$ is necessarily of the form \eqref{eq:S general}
for almost every $(v,x,y)\in \interior(\A)\times \R$, where $\A = \{(\VaR_p^-(F),\rho(F)) : F\in\M\}\subseteq \R^2$ and
$S^*\colon\R^2\to\R$ is a (strictly) $\M$-consistent score for $\rho^*$ such that  $y\mapsto S^*(x,y)$ is (strictly) increasing for  $x\in \R$.
\end{enumerate}
\end{theorem}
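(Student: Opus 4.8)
\emph{Parts (i) and (ii).} The plan is to recycle the identity behind the proof of Theorem~\ref{thm:identifiability}: for a measurable function $h$ and any $F$ with $v\in Q_p(F)$,
\begin{equation*}
\int\Big[\id_{\{y>v\}}h(y)+\big(\id_{\{y\le v\}}-p\big)h(v)\Big]\d F(y)
=\int_{(v,\infty)}h(y)\d F(y)+\big(F(v)-p\big)h(v)
=(1-p)\int h(y)\d F_p(y),
\end{equation*}
where the ``correction term'' $(F(v)-p)h(v)$ is exactly what makes this valid without restricting to $\M_{(p)}$ (cf.\ Remark~\ref{rem:identifiability}). Taking $h=S^*(x,\cdot)$ yields $\int S_v(x,y)\d F(y)=(1-p)\int S^*(x,y)\d F_p(y)+\int a\,\d F$ for every $F\in\M(v,Q_p)$; since $F_p\in\M$ and $S^*$ is strictly $\M$-consistent for $\rho^*$, the right-hand side is uniquely minimized at $x=\rho^*(F_p)=\rho(F)$, so $S_v$ is strictly $\M(v,Q_p)$-consistent for $\rho$; combined with the $\M^0$-elicitability of $Q_p$, this is part~(i). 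For part~(ii) I would observe that for each fixed $x$ the map $(v,y)\mapsto S(v,x,y)$ from \eqref{eq:S general} is of the Gneiting form for the $p$-quantile (see \cite{Gneiting2011b}) with strictly increasing function $g=S^*(x,\cdot)$, so $\argmin_v\int S(v,x,y)\d F(y)=Q_p(F)$ for \emph{every} $x\in\R$ and $F\in\M$. Together with part~(i), which gives $\argmin_x\int S(v,x,y)\d F(y)=\{\rho(F)\}$ for each $v\in Q_p(F)$, a decoupling argument then yields $\argmin_{(v,x)}\int S\,\d F=\{(v,\rho(F)):v\in Q_p(F)\}$: a global minimizer $(v_0,x_0)$ must minimize $v\mapsto\int S(v,x_0,y)\d F(y)$, hence $v_0\in Q_p(F)$, and then $x\mapsto\int S(v_0,x,y)\d F(y)$, hence $x_0=\rho(F)$; conversely each $(v^*,\rho(F))$ with $v^*\in Q_p(F)$ attains the common minimal value because $\int S(v,x,y)\d F(y)\ge\int S(v^*,x,y)\d F(y)\ge\int S(v^*,\rho(F),y)\d F(y)$, using that $v^*\in Q_p(F)$ minimizes in $v$ for every $x$.

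\emph{Part (iii): set-up and Osband's principle.} This is the hard direction; the plan is to adapt the Osband-principle argument of \cite{FZ16} from the mean generator to an arbitrary one. Since $\M\subseteq\M^{\rm c}_p$, the quantile $Q_p$ is singleton-valued on $\M$ and $(Q_p,\rho)$ is a genuine $\R^2$-valued functional. Under Assumption~\ref{ass:main thm}, a smooth strictly $\M$-consistent score for $\rho^*$ yields, via its prediction-derivative, a strict $\M$-identification function $V^*$ for $\rho^*$, and Theorem~\ref{thm:identifiability} (using that $\M_{(p)}\cap\M=\M$) then supplies the strict identification function $V=(V_1,V_2)$ for $(Q_p,\rho)$ with $V_1(v,y)=\id_{\{y\le v\}}-p$ and $V_2(v,x,y)=\id_{\{y>v\}}V^*(x,y)+(\id_{\{y\le v\}}-p)V^*(x,v)$. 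Writing $\bar S_F(v,x)=\int S(v,x,y)\d F(y)$ and $\bar V_{i,F}$ analogously, Osband's principle (see \cite{FZ16}) provides a matrix-valued function $h=(h_{ij})_{i,j=1,2}$ on $\interior(\A)$ with $\partial_v\bar S_F(v,x)=h_{11}(v,x)\bar V_{1,F}(v,x)+h_{12}(v,x)\bar V_{2,F}(v,x)$ and $\partial_x\bar S_F(v,x)=h_{21}(v,x)\bar V_{1,F}(v,x)+h_{22}(v,x)\bar V_{2,F}(v,x)$; since both sides are linear in $F$ and $\M$ is rich enough (Assumption~\ref{ass:main thm}), this upgrades to the pointwise-in-$y$ identities $\partial_x S(v,x,y)=h_{21}(v,x)V_1(v,y)+h_{22}(v,x)V_2(v,x,y)$, and its analogue for $\partial_v S$, for Lebesgue-almost every $y$.

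\emph{Part (iii): extraction and monotonicity.} The remaining steps, the integration bookkeeping, are the most delicate. On $\{y\le v\}$ the $\partial_x$-identity shows $\partial_x S(v,x,y)$ is independent of $y$, which isolates an additive function of $y$ and a quantile-shaped term $(\id_{\{y\le v\}}-p)S^*(x,v)$; on $\{y>v\}$, demanding that the emerging ``$\rho^*$-part'' be free of $v$ makes $h_{22}(v,x)=\eta(x)$ independent of $v$; and the $\partial_v$-identity then strips off the residual $v$-dependence of the additive term. Integrating up recovers the form \eqref{eq:S general} with an $\M$-integrable $a$ and a function $S^*$ whose prediction-derivative equals $\eta(x)V^*(x,\cdot)$; by the identity of the first paragraph, $\bar S_F(v,x)=(1-p)\int S^*(x,y)\d F_p(y)+\text{const}$ at $v=\VaR_p^-(F)$, so the (strict) consistency of $S$ for $(Q_p,\rho)$ transfers to $S^*$ being a (strictly) $\M$-consistent score for $\rho^*$, which forces $\eta>0$ (respectively $\eta\ge0$). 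Finally, a direct computation from \eqref{eq:S general} gives $\partial_v\bar S_F(v,x)=(\partial_2 S^*)(x,v)\,\big(F(v)-p\big)$, where $\partial_2 S^*$ denotes the derivative of $S^*$ in its observation argument; this vanishes at $v=\VaR_p^-(F)$ because $F\in\M_{(p)}$, and requiring this critical point to be a minimum in $v$ -- using $F(v)-p<0$ for $v<\VaR_p^-(F)$ and $F(v)-p>0$ for $v>\VaR_p^-(F)$ on $\M^{\rm c}_p$ -- forces $y\mapsto S^*(x,y)$ to be (strictly) increasing, the asserted monotonicity. The main obstacle is this Osband-type extraction under the precise regularity of Assumption~\ref{ass:main thm}: justifying the passage to the pointwise-in-$y$ identities, and teasing out the second-order conditions that simultaneously yield $\eta>0$ and the monotonicity of $S^*$ in its observation argument.
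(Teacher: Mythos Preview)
Your proofs of parts (i) and (ii) are correct and track the paper's argument closely; the explicit decoupling argument you give in (ii) is in fact more detailed than the paper's terse ``the rest follows from part (i)''.

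For part (iii), the overall skeleton---Osband's principle, then integration, then monotonicity from consistency---is right, and your monotonicity argument via $\partial_v\bar S_F(v,x)=(\partial_2S^*)(x,v)\big(F(v)-p\big)$ is essentially the paper's. The genuine gap is in the extraction step. Your justification that ``demanding that the emerging `$\rho^*$-part' be free of $v$ makes $h_{22}(v,x)=\eta(x)$'' is circular: there is no a~priori reason any part of the score should be $v$-free---that is exactly part of the conclusion. The paper obtains this (together with $h_{12}\equiv0$, which you never address) from the \emph{symmetry of the Hessian} of the expected score, i.e.\ Schwarz's theorem $\partial_x\partial_v\bar S_F=\partial_v\partial_x\bar S_F$. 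Evaluating this identity at the true value $(v,x)=(\VaR_p^-(F),\rho(F))$ and invoking Assumption~\ref{ass:main thm}(vi)---two distributions sharing $(\VaR_p^-,\rho)$ and $\partial_x\bar V_2$ but with different densities $f(v)$---forces $h_{12}=0$ and $h_{21}(v,x)=h_{22}(v,x)V^*(x,v)$; a second evaluation at generic $(v,x)$ then yields $\partial_v h_{22}=0$ (and $\partial_x h_{11}=\partial_v h_{21}$, needed for the integration to be path-independent). Only once this structure of $h$ is pinned down does the integration bookkeeping produce the form \eqref{eq:S general}. So the ``second-order conditions'' you flag as the main obstacle are not really about $\eta>0$ or about monotonicity of $S^*$ (those do follow from consistency once the form \eqref{eq:S general} is established, as you correctly argue), but about deriving $h_{12}=0$ and the $v$-independence of $h_{22}$ via the mixed-partials identity combined with the density-variation part of Assumption~\ref{ass:main thm}.
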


\begin{proof}
For the part (i), let  $F\in\M$ and $v\in Q_p(F)$. For $S_v$ in \eqref{eq:S general 1}, it holds that 
$$
\int S_v(x,y)\d F(y) = \int S^*(x,y)\d \big(F(y) - p\big)_+ + \int a(y)\d F(y).
$$
Hence
\begin{align*}
\argmin_{x\in\R} \int S_v(x,y)\d F(y) 
= \argmin_{x\in\R} \int S^*(x,y)\d F_p(y) = \rho^*(F_p) = \rho(F),
\end{align*}
which shows the claim. 

For the part (ii), observe that, since $S^*(x,y)$ is strictly increasing in $y$ for all $x\in\R$, 
the function $(v,y)\mapsto S(v,x,y)$ given in \eqref{eq:S general} is a generalized piecewise linear loss, which is strictly $\M$-consistent for $Q_p$, see \cite{Gneiting2011b}.
Hence, for all $x\in\R$,
$$
Q_p(F) = \argmin_{v\in\R} \int S(v,x,y) \d F(y).
$$
The rest follows from part (i).
The proof of part (iii) can be found in Appendix \ref{Proof}. 
\end{proof}

\begin{remark}
\label{rem:correction term}
It should be pointed out that for the elicitability result of Theorem \ref{thm:elicitability} part (ii), we do not need the restriction to $\M_{(p)}\cap \M$ as for the corresponding identifiability result of Theorem \ref{thm:identifiability}.
The reason for this is that, on the one hand, the $p$-quantile $Q_p$ is elicitable on the entire $\M^0$ whereas it is identifiable only on $\M_{(p)}$. 
On the other hand, the presence of the ``correction term'' $(\id_{\{y\le v\}} - p)S^*(x,y)$ in \eqref{eq:S general} is crucial. In fact, part (i) of Theorem \ref{thm:elicitability} would hold on $\M\cap\M_{(p)}$ without this correction term. That is, for any $v\in\R$, the score $S_v(x,y) = \id_{\{y>v\}}S^*(x,y)$ is strictly consistent for $\rho$ on $\big\{F\in\M\cap\M_{(p)}: v\in Q_p(F)\big\}$.
For part (ii), the correction term is also needed to render the score $(v,y) \mapsto  \id_{\{y>v\}}S^*(x,y) +  (\id_{\{y\le v\}} - p)S^*(x,v)$ a strictly $\M$-consistent score for $Q_p$ for any $x$. 
Interestingly, the score $S_v(x,y) = \id_{\{y>v\}}S^*(x,y)$ constitutes a weighted score in the sense of \cite{HolzmannKlar2017} -- see also \cite{GneitingRanjan2011} -- with weight function $w(y) = \id_{\{y>v\}}$.
However, for the score in \eqref{eq:S general}, the weight function obtains a variable threshold, given in terms of a quantile-forecast $v$. 
This extends the existing theory substantially, and it only works with the additional correction term and the monotonicity requirement on $S^*$.
\end{remark}

Recall that if $S^*(x,y)$ is (strictly) $\M$-consistent for $\rho^*$ and if $g\colon\R\to\R$ is $\M$-integrable, then  $\tilde S^* $ given by 
$$
\tilde S^*(x,y) = S^*(x,y) + g(y)
$$
is again (strictly) $\M$-consistent for $\rho^*$. Following \cite{Dawid1998}, we call $S^*$ and $\tilde S^*$ \emph{strongly equivalent}.
Even though they coincide in terms of (strict) consistency for $\rho^*$, they can be clearly different in terms of their monotonicity behavior in $y$. 
The following lemma explores conditions for the existence of a strongly equivalent version of the score which is strictly increasing in its second argument.

\begin{lemma}\label{lemma:sufficient}
Suppose that $S^*\colon\R^2\to\R$ is partially differentiable with respect to its second argument.
Then
there is a differentiable function $g\colon\R\to\R$ such that $y\mapsto S^*(x,y) + g(y)$ is increasing for all $x$ if and only if there is a function $h\colon\R\to\R$ such that 
\begin{equation}
\label{eq:bounded from below}
h(y)\le \partial_y S^*(x,y) \quad \text{ for all }x,y\in\R.
\end{equation}
\end{lemma}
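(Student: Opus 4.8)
The plan is to prove both implications of the biconditional by exhibiting or extracting the relevant function.

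For the ``only if'' direction, suppose a differentiable $g$ exists with $y\mapsto S^*(x,y)+g(y)$ increasing for all $x$. Since $S^*$ is partially differentiable in its second argument and $g$ is differentiable, the function $y\mapsto S^*(x,y)+g(y)$ is differentiable in $y$, and its being increasing gives $\partial_y S^*(x,y) + g'(y)\ge 0$, i.e.\ $\partial_y S^*(x,y) \ge -g'(y)$ for all $x,y\in\R$. Setting $h(y) := -g'(y)$ then yields \eqref{eq:bounded from below}. This direction is essentially immediate.

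For the ``if'' direction, suppose $h\colon\R\to\R$ satisfies $h(y)\le \partial_y S^*(x,y)$ for all $x,y$. The natural candidate is $g$ with $g'(y) = -h(y)$, i.e.\ $g(y) = -\int_0^y h(t)\d t$; then formally $\partial_y\big(S^*(x,y)+g(y)\big) = \partial_y S^*(x,y) - h(y) \ge 0$, which would make $y\mapsto S^*(x,y)+g(y)$ increasing, and $g$ would be differentiable. The subtlety is that $h$ need not be Lebesgue-integrable (or even measurable) a priori, so $\int_0^y h(t)\d t$ may not be well-defined; this is the main obstacle. To circumvent it, I would first observe that $h$ can be replaced without loss of generality by a ``better'' lower bound: for instance, one may take $\bar h(y) := \inf_{x\in\R}\partial_y S^*(x,y)$, which still satisfies $\bar h(y)\le \partial_y S^*(x,y)$ and $\bar h \ge h$. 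More robustly, since for any \emph{fixed} $x_0$ the difference $y\mapsto S^*(x,y) - S^*(x_0,y)$ has $y$-derivative $\partial_y S^*(x,y) - \partial_y S^*(x_0,y)$, it suffices to choose $g$ cancelling the monotone-failure part of the single function $S^*(x_0,\cdot)$ relative to a reference. Concretely: the hypothesis forces $\partial_y S^*(x,y)$ to be bounded below (as a function of $x$) by $h(y)$, hence $y\mapsto S^*(x,y) - S^*(x_0,y)$ and $y\mapsto S^*(x_0,y) - \int_0^y h(t)\,\mathrm{d}t$ both have nonnegative $y$-derivative; setting $g(y) := -S^*(x_0,y) + \int_0^y h(t)\,\mathrm{d}t$ is still problematic if $h$ is not integrable, but $g(y) := -S^*(x_0,y)$ together with the observation that $S^*(x,y) + g(y) = S^*(x,y) - S^*(x_0,y)$ already has nonnegative $y$-derivative for all $x$ gives exactly what we want, and this $g$ is differentiable in $y$ because $S^*(x_0,\cdot)$ is partially differentiable. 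I would therefore take $g(y) = -S^*(x_0,y)$ for an arbitrary fixed $x_0\in\R$: then $\partial_y\big(S^*(x,y)+g(y)\big) = \partial_y S^*(x,y) - \partial_y S^*(x_0,y) \ge h(y) - \partial_y S^*(x_0,y)$, which is not obviously nonnegative — so this simplistic choice does \emph{not} work directly, and the role of \eqref{eq:bounded from below} is precisely to rule out the pathology where the infimum over $x$ of $\partial_y S^*(x,y)$ is $-\infty$.

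The cleanest correct route, which I would adopt, is: from \eqref{eq:bounded from below}, the function $\bar h(y) := \inf_{x}\partial_y S^*(x,y)$ is finite-valued (bounded below by $h$) and bounded above by $\partial_y S^*(x_0,y)$ for any fixed $x_0$; being an infimum of a family bracketed between two functions, and using that $\partial_y S^*(x_0,\cdot)$ is the derivative of the partially differentiable map $S^*(x_0,\cdot)$ hence locally bounded (indeed finite everywhere), $\bar h$ is real-valued. Then I would set $g(y) := -\int_0^y \bar h(t)\,\mathrm{d}t$, noting $\bar h$ is locally bounded (squeezed between $h$ — which we may assume measurable, replacing it by $\bar h$ itself — and the everywhere-finite $\partial_y S^*(x_0,\cdot)$), so the integral is well-defined and $g$ is differentiable with $g'=-\bar h$. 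Finally $\partial_y\big(S^*(x,y)+g(y)\big) = \partial_y S^*(x,y)-\bar h(y)\ge 0$ by definition of $\bar h$, hence $y\mapsto S^*(x,y)+g(y)$ is increasing for every $x$. The one point requiring care — and the main obstacle — is justifying measurability/local boundedness of $\bar h$ so that the integral defining $g$ makes sense; this is where I would invest the technical detail, possibly imposing a mild joint measurability or continuity hypothesis on $\partial_y S^*$ if the statement as given tacitly assumes it, or else arguing that the conclusion only needs $g$ with $g' = -\bar h$ a.e.\ together with $g$ absolutely continuous, which suffices for monotonicity of the sum.
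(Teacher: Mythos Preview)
Your core approach matches the paper's exactly: for the ``only if'' direction you set $h=-g'$, and for the ``if'' direction you take $g$ to be an antiderivative of $-h$. The paper's proof is essentially two lines and does precisely this, without further elaboration.

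Where you diverge is in worrying about measurability and local integrability of $h$ (or of $\bar h = \inf_x \partial_y S^*(x,y)$) so that the antiderivative is well-defined. The paper simply writes ``we can choose $g$ to be an antiderivative of $-h$'' and moves on, tacitly assuming this causes no trouble. Your concern is legitimate at the level of generality stated --- the lemma places no regularity on $h$ --- but the paper treats the result informally and does not address it. Your exploratory detour via $g(y)=-S^*(x_0,y)$ does not work, as you yourself note; you can safely drop that passage. In short: your proof is the paper's proof plus a justified caveat about a technicality the paper ignores.
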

\begin{proof}
If \eqref{eq:bounded from below} holds, then we can choose $g$ to be an antiderivative of $-h$. Moreover, if $g$ is an antiderivative of $-h$ plus a strictly increasing function, then $S^*(x,y) + g(y)$ is even strictly increasing in $y$.
On the other hand, if $ S^*(x,y) + g(y)$ is increasing in $y$, then \eqref{eq:bounded from below} holds with $h=-g'$.
\end{proof}

In practice, the forecast for $Q_p$ might rather play an auxiliary role in comparison to the forecast of the $p$-tail risk measure $\rho$.
The following proposition establishes two relevant notions of \emph{order-sensitivity} for strictly consistent scoring functions of the form \eqref{eq:S general}.
For a general discussion of order-sensitivity, we refer to \cite{FisslerZiegel2019}.

\begin{proposition}
For the strictly $\M$-consistent score in \eqref{eq:S general} and for any $F\in\M$, 
if $v_2<v_1\le \VaR_p^-(F)$ or $\VaR_p^+(F)\le v_1 <v_2$, then 
\begin{enumerate}[\rm (i)]
\item
$
\int S(v_1,x,y)\d F(y) < \int S(v_2,x,y)\d F(y)  \text{ for all }x\in\R;
$
\item
$
\min_{x\in\R}\int S(v_1,x,y)\d F(y) < \min_{x\in\R}\int S(v_2,x,y)\d F(y).
$
\end{enumerate}
\end{proposition}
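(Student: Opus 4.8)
The plan is to fix $x\in\R$ and $F\in\M$ once and for all and to study the single-variable function $v\mapsto\int S(v,x,y)\d F(y)$. Since the summand $a(y)$ in \eqref{eq:S general} does not depend on $v$ and is $\M$-integrable, it suffices to analyse
\begin{equation*}
\psi_x(v):=\int \id_{\{y>v\}}S^*(x,y)\d F(y)+\big(F(v)-p\big)S^*(x,v),
\end{equation*}
and part (i) amounts to showing that for each $x$ the map $v\mapsto\psi_x(v)$ is strictly decreasing on $(-\infty,\VaR_p^-(F)]$ and strictly increasing on $[\VaR_p^+(F),\infty)$. Because the score \eqref{eq:S general} is assumed strictly $\M$-consistent, Theorem \ref{thm:elicitability}(ii) places us in the regime where $y\mapsto S^*(x,y)$ is \emph{strictly} increasing for every $x$, and this strictness is precisely what will drive the strict inequalities below.

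For part (i) I would compute the increment of $\psi_x$ directly. For $v_2<v_1\le\VaR_p^-(F)$, cancelling the common tail integral yields
\begin{equation*}
\psi_x(v_1)-\psi_x(v_2)=-\int_{(v_2,v_1]}S^*(x,y)\d F(y)+\big(F(v_1)-p\big)S^*(x,v_1)-\big(F(v_2)-p\big)S^*(x,v_2),
\end{equation*}
which is a well-defined finite number even if the tail integral of $S^*(x,\cdot)$ diverges, since $S^*(x,\cdot)$ is bounded on the bounded interval $(v_2,v_1]$ by monotonicity. Splitting off a possible atom of $F$ at $v_1$ and bounding $S^*(x,y)\ge S^*(x,v_2)$ for $y\in(v_2,v_1)$ (strictly wherever $F$ charges that open interval), elementary algebra yields
\begin{equation*}
\psi_x(v_1)-\psi_x(v_2)\le\big(F(v_1-)-p\big)\big(S^*(x,v_1)-S^*(x,v_2)\big),
\end{equation*}
with strict inequality whenever $F$ puts positive mass on $(v_2,v_1)$. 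Since $v_1\le\VaR_p^-(F)$ forces $F(v_1-)\le p$ and $S^*(x,v_1)>S^*(x,v_2)$, the right-hand side is $\le0$; it is $<0$ unless $F(v_1-)=p$, which can only happen when $v_1=\VaR_p^-(F)$, and then $v_2<\VaR_p^-(F)$ gives $F(v_2)<p=F(v_1-)$, so $F$ charges $(v_2,v_1)$ and the bound is strict. Either way $\psi_x(v_1)<\psi_x(v_2)$, that is, $\int S(v_1,x,y)\d F(y)<\int S(v_2,x,y)\d F(y)$. The case $\VaR_p^+(F)\le v_1<v_2$ is symmetric, the analogous bound reading $\psi_x(v_1)-\psi_x(v_2)\le\big(p-F(v_1)\big)\big(S^*(x,v_2)-S^*(x,v_1)\big)$ with $F(v_1)\ge p$, and the boundary case $F(v_1)=p$ being covered because $v_2>\VaR_p^+(F)$ forces positive $F$-mass on $(v_1,v_2)$. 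The work here is a careful but essentially routine accounting of atoms and of these two boundary cases.

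For part (ii), part (i) already gives $\int S(v_1,x,y)\d F(y)<\int S(v_2,x,y)\d F(y)$ for \emph{every} $x\in\R$, so passing to the infimum in $x$ yields ``$\le$'' at once; the task is to upgrade this to a strict inequality. I would do so by evaluating the $v_1$-expected score at a minimiser $x_2$ of $x\mapsto\int S(v_2,x,y)\d F(y)$, giving
\begin{equation*}
\min_{x\in\R}\int S(v_1,x,y)\d F(y)\le\int S(v_1,x_2,y)\d F(y)<\int S(v_2,x_2,y)\d F(y)=\min_{x\in\R}\int S(v_2,x,y)\d F(y).
\end{equation*}
The genuinely delicate point — and the main obstacle — is the existence of such a minimiser, equivalently the fact that strictness is not lost when passing to the minimum. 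This is immediate in all the examples treated in the paper and, more generally, under the regularity conditions already in force; alternatively one can sidestep attainment by combining the continuity and strict positivity of $x\mapsto\int S(v_2,x,y)\d F(y)-\int S(v_1,x,y)\d F(y)$ with compactness of the near-minimal sublevel sets of $x\mapsto\int S(v_1,x,y)\d F(y)$, which for instance when $v_1\in Q_p(F)$ reduces, up to an additive constant, to $(1-p)\int S^*(x,y)\d F_p(y)$ and is minimised at $\rho(F)$.
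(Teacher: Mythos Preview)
Your argument is correct. For part (i) you take a genuinely different route from the paper: the paper observes that for fixed $x$ the map $(v,y)\mapsto S(v,x,y)$ is a generalized piecewise linear loss and hence strictly $\M$-consistent for $Q_p$, and then invokes the known order-sensitivity of such losses (citing \cite{Nau1985} and \cite{BB15}) to conclude directly. You instead compute the increment $\psi_x(v_1)-\psi_x(v_2)$ by hand and bound it by $(F(v_1-)-p)(S^*(x,v_1)-S^*(x,v_2))$ (respectively the symmetric expression on the right tail), carefully tracking atoms and the boundary cases $F(v_1-)=p$ and $F(v_1)=p$. Your approach is self-contained and makes the mechanism transparent, at the cost of a somewhat longer case analysis; the paper's approach is shorter but relies on external order-sensitivity results.

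For part (ii) your argument coincides with the paper's: evaluate the $v_1$-expected score at a minimiser $x_2$ for the $v_2$-expected score and chain the inequalities. You are right to flag the existence of $x_2$ as the only subtle point; the paper simply writes $x_2\in\argmin$ without comment, effectively reading the $\min$ in the statement as an implicit attainment hypothesis.
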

\begin{proof}
Part (i) follows from the fact that for any fixed $x\in\R$, the score $(v,y)\mapsto S(v,x,y)$ is strictly $\M$-consistent for $Q_p$. This readily implies the claimed order-sensitivity by \cite[Proposition 3]{Nau1985} and \cite[Proposition 3.4]{BB15}.
For part (ii) suppose that $x_1\in \argmin_{x\in\R} \int S(v_1,x,y)\d F(y)$ and $x_2\in\argmin_{x\in\R} \int S(v_2,x,y)\d F(y)$. Then
$\int S(v_2,x_2,y)\d F(y) > \int S(v_1,x_2,y)\d F(y) \ge \int S(v_1,x_1,y)\d F(y)$, where the first inequality is due to part (i).
\end{proof}

\subsection{Special cases and discussions}

We first provide the simple example of the pair of the expectation and ES. 
\begin{example}
Take the $p$-tail pair  $(\rho, \rho^*) = (\ES_p, \E)$. If we choose the strictly $\M^1$-consistent score $S^*(x,y) = \frac{1}{1-p}\big(\phi'(x)(x-y)-\phi(x)\big)+g(y)$ with $\phi$ strictly convex, then the score in \eqref{eq:S general} coincides with the one in \eqref{eq:S ES VaR}.
The condition that $y\mapsto S^*(x,y)$ is strictly increasing is equivalent to the condition that $y\mapsto g(y) - \frac{1}{1-p}\phi'(x)y$ is strictly increasing. 
Invoking Lemma \ref{lemma:sufficient}, a necessary and  sufficient condition for the existence of such a $g$ is that $\phi'$ is bounded from above.
\end{example}

Let us consider the $\tau$-expectile $\rho_\tau$, $\tau\in(0,1)$. On the class $\M^1$ of distributions with finite mean,
\cite{NeweyPowell1987} introduced the $\tau$-expectile as the unique solution $x=\rho_\tau(F)$ to the equation
$$
\tau\int_x^\infty (y-x)\d F(y) = (1-\tau)\int_{-\infty}^x(x-y)\d F(y).
$$
For $\tau=1/2$, this family contains the usual expectation.
\cite{BelliniETAL2014} showed that for $\tau\ge1/2$, the $\tau$-expectile constitutes a coherent risk measure.   \cite{Z16} established that these are the only elicitable law-based coherent risk measures.
The following elicitability result about induced $p$-tail risk measures is novel to the literature.

\begin{proposition}\label{prop:expectile}
Let $(\rho,\rho^*)$ be a $p$-tail pair and 
$\rho^*$ the $\tau$-expectile, $\tau\in(0,1)$. 
The pair $(Q_p, \rho)$ is $\M^1$-elicitable.
\end{proposition}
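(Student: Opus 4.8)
The strategy is to apply Theorem~\ref{thm:elicitability}(ii) to the $\tau$-expectile $\rho^*=\rho_\tau$, which requires producing a strictly $\M^1$-consistent score $S^*$ for $\rho_\tau$ that is strictly increasing in its second argument $y$ for every fixed $x$. Once such an $S^*$ exists, Theorem~\ref{thm:elicitability}(ii) immediately yields that $(Q_p,\rho)$ is $\M^1$-elicitable with the score of the form \eqref{eq:S general}, so the whole proof reduces to the monotonicity issue for a suitably chosen score of the expectile.

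\textbf{Key steps.} First I would recall the known form of strictly consistent scores for the $\tau$-expectile: by \cite{G11} (or \cite{Gneiting2011b}), a strictly $\M^1$-consistent score is
\begin{equation*}
S^*(x,y) = \big|\id_{\{y\le x\}} - \tau\big|\,\big(\phi(y) - \phi(x) - \phi'(x)(y-x)\big) + g(y),
\end{equation*}
for any strictly convex $\phi$ with subgradient $\phi'$ and any $\M^1$-integrable $g$. Second, I would invoke Lemma~\ref{lemma:sufficient}: after choosing $\phi$ so that $\phi'$ is bounded (e.g.\ $\phi(x)$ a smooth function with bounded derivative, so that the derivative of $\phi$ is bounded on all of $\R$), the partial derivative $\partial_y S^*(x,y)$ (before adding $g$) is bounded from below uniformly in $x$, hence there is a differentiable $g$ making $y\mapsto S^*(x,y)+g(y)$ strictly increasing for all $x$. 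Concretely, with $\phi' $ bounded, the term $\big|\id_{\{y\le x\}}-\tau\big|$ lies in $[\min(\tau,1-\tau),\max(\tau,1-\tau)]$ and the inner bracket's $y$-derivative is $\phi'(y)-\phi'(x)$, which is bounded; adding a strictly increasing $g$ with large enough slope dominates it. Third, I would feed this $S^*$ into Theorem~\ref{thm:elicitability}(ii), concluding $\M^1$-elicitability of $(Q_p,\rho)$, taking care that the integrability condition on the additive term $a$ in \eqref{eq:S general} is met on $\M^1$ (which holds since $g$ and $a$ can be chosen linear-growth).

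\textbf{Main obstacle.} The delicate point is verifying that $\partial_y S^*(x,y)$ is bounded from below \emph{uniformly over all} $x\in\R$, not just for $x$ in a bounded set; this is exactly where the boundedness of $\phi'$ enters, since otherwise the term $-\phi'(x)$ in $\phi'(y)-\phi'(x)$ could be arbitrarily large positive, pushing $\partial_y S^*$ arbitrarily negative. One must also check that the piecewise structure at $y=x$ does not create a downward jump: at $y=x$ the inner bracket vanishes, so the jump in $S^*$ across $y=x$ is zero and no monotonicity is lost there. With a bounded $\phi'$ (e.g.\ $\phi$ chosen so that $\phi'$ is a bounded strictly increasing function, which still makes $\phi$ strictly convex, or a truncation argument combined with restriction to $\M^1$), the uniform lower bound holds, Lemma~\ref{lemma:sufficient} applies, and the proof is complete.
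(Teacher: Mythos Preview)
Your proposal is correct and follows essentially the same route as the paper: recall the Bregman-type strictly consistent scores for the $\tau$-expectile, pick a strictly convex $\phi$ with bounded subgradient so that $\partial_y S^*(x,y)=|\id_{\{y\le x\}}-\tau|(\phi'(y)-\phi'(x))$ is uniformly bounded below, invoke Lemma~\ref{lemma:sufficient} to add a suitable $g$, and then apply Theorem~\ref{thm:elicitability}(ii). The paper makes the construction concrete by taking $\phi(x)=x^2/(1+|x|)$, but otherwise the argument is the same (your phrasing of the obstacle has a small sign slip---unbounded $\phi'$ makes $\phi'(y)-\phi'(x)$ arbitrarily \emph{negative} for $y<x$---but the conclusion that $\phi'$ must be bounded is correct).
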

\begin{proof}
Subject to mild regularity and integrability conditions, any strictly $\M^1$-consistent score for $\rho^*$ is of the form
$$
S^*(x,y) = |\id_{\{y\le x\}} - \tau|\big(\phi(y)-\phi(x) + \phi'(x)(x-y)\big) + g(y),
$$
where $\phi$ is strictly convex with subgradient $\phi'$ and $g$ is arbitrary \citep[Theorem 10]{G11}. 
For $g=0$ and $y\neq x$, the partial derivative is 
$$
\partial_y S^*(x,y) = |\id_{\{y\le x\}} - \tau|\big(\phi'(y)-\phi'(x)\big).
$$
For $y=x$, the left-sided and right-sided partial derivative with respect to $y$ exist, coincide and are both 0.
Take $\phi$ such that $|\phi'|<C$ for some $C>0$, for example we can use $\phi(x) = x^2/(1+|x|)$, then $|\partial_y S^*(x,y) |<2C$.
Therefore, we can apply Theorem \ref{thm:elicitability} and Lemma \ref{lemma:sufficient} to construct a strictly $\M^1$-consistent score for the $p$-tail-$\tau$-expectile and $Q_p$.
\end{proof}

\begin{proposition}\label{prop:ratio}
Let $\rho^*\colon\M\to\R$ be the ratio of expectations, that is,
$$
\rho^*(F) = \frac{\int u(y)\d F(y)}{\int t(y)\d F(y)}
$$
for two functions $u,t\colon\R\to\R$, where $\M$ is chosen such that $\rho^*$ is well-defined and finite.
Let $(\rho,\rho^*)$ be a $p$-tail pair. If $u$ and $t$ are differentiable, 
then the pair $(Q_p, \rho)$ is $ \M$-elicitable.
\end{proposition}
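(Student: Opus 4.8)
The plan is to follow the blueprint of the proof of Proposition~\ref{prop:expectile}: produce a strictly $\M$-consistent score $S^*$ for the generator $\rho^*$ whose partial derivative in the second argument admits a lower bound depending on $y$ only, and then combine Lemma~\ref{lemma:sufficient} with Theorem~\ref{thm:elicitability}(ii). Without loss of generality we may assume $\int t(y)\,\d F(y)>0$ for all $F\in\M$ (otherwise replace $(u,t)$ by $(-u,-t)$, which leaves $\rho^*$ unchanged). Then $\rho^*$ is $\M$-identifiable with the strict identification function $V^*(x,y)=x\,t(y)-u(y)$, and a Bregman-type construction (integrating $\phi''(x)V^*(x,y)$ with respect to $x$ for a strictly convex $\phi$, in the spirit of Osband's principle) yields
\begin{equation*}
S^*(x,y)=\psi(x)\,t(y)-\phi'(x)\,u(y)+a(y),\qquad \psi(x):=x\phi'(x)-\phi(x).
\end{equation*}
Since $\int S^*(x,y)\,\d F(y)=\psi(x)\int t\,\d F-\phi'(x)\int u\,\d F+\int a\,\d F$ has $x$-derivative $\phi''(x)\big(x\int t\,\d F-\int u\,\d F\big)$, and $\phi''>0$, $\int t\,\d F>0$, its unique minimiser is $x=\rho^*(F)$; hence $S^*$ is strictly $\M$-consistent for $\rho^*$ for every strictly convex $\phi$ and every $\M$-integrable $a$.

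The key step is to exploit the freedom in the choice of $\phi$. Since $u$ and $t$ are differentiable, $S^*$ is partially differentiable in $y$, with
\begin{equation*}
\partial_y S^*(x,y)=\psi(x)\,t'(y)-\phi'(x)\,u'(y)+a'(y).
\end{equation*}
Choose $\phi$ strictly convex such that \emph{both} $\phi'$ and $\psi$ are bounded on $\R$; this is possible whenever $\int_\R(1+|s|)\phi''(s)\,\d s<\infty$, since then $\|\phi'\|_\infty=\int_\R\phi''(s)\,\d s<\infty$ and, because $\psi'(x)=x\phi''(x)$ and $\psi(0)=-\phi(0)$, also $\|\psi\|_\infty\le|\phi(0)|+\int_\R|s|\phi''(s)\,\d s<\infty$ (a concrete choice is $\phi''(s)=(1+s^2)^{-2}$). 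Taking $a=0$, we then obtain, uniformly in $x\in\R$,
\begin{equation*}
\partial_y S^*(x,y) \ge -\|\psi\|_\infty\,|t'(y)|-\|\phi'\|_\infty\,|u'(y)| =: h(y),
\end{equation*}
a bound that does not involve $x$. By Lemma~\ref{lemma:sufficient} there is a differentiable $g$ (an antiderivative of $-h$ plus a strictly increasing function, e.g.\ a bounded one such as $\arctan$) such that $\tilde S^*(x,y):=S^*(x,y)+g(y)$ is strictly increasing in $y$ for every $x$, while remaining strictly $\M$-consistent for $\rho^*$ by strong equivalence.

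Applying Theorem~\ref{thm:elicitability}(ii) to $\tilde S^*$ then shows that $(Q_p,\rho)$ is $\M$-elicitable, with a strictly $\M$-consistent score of the form \eqref{eq:S general} in which $S^*$ is replaced by $\tilde S^*$. I expect the main obstacle to be the step above: one must secure strict consistency for $\rho^*$ \emph{and} strict monotonicity of $y\mapsto S^*(x,y)$ \emph{simultaneously for all} $x$, two requirements that pull against each other, which is precisely why the growth of $\phi$ must be controlled so that $\phi'$ and $x\phi'(x)-\phi(x)$ are both bounded. A secondary, more routine point is ensuring that $g$ (equivalently $h$), and hence the score \eqref{eq:S general}, is $\M$-integrable and that $-h$ admits a genuine differentiable antiderivative; this is immediate, for instance, when $u'$ and $t'$ are continuous and bounded and $\M\subseteq\M^1$, and in general one restricts $\M$ accordingly, in line with the standing convention that $\M$ is chosen so that $\rho^*$ is well-defined and finite.
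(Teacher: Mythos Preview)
Your proposal is correct and follows essentially the same route as the paper: write the strictly $\M$-consistent Bregman-type score for $\rho^*$ as $S^*(x,y)=\psi(x)t(y)-\phi'(x)u(y)+a(y)$ with $\psi(x)=x\phi'(x)-\phi(x)$, choose a strictly convex $\phi$ for which both $\phi'$ and $\psi$ are bounded, and then invoke Lemma~\ref{lemma:sufficient} together with Theorem~\ref{thm:elicitability}(ii). The only cosmetic difference is the concrete choice of $\phi$: the paper takes $\phi(x)=x^2/(1+|x|)$ and checks directly that $\phi'\in[-1,1]$ and $\psi\in[0,1]$, whereas you give the integrability criterion $\int_\R(1+|s|)\phi''(s)\,\d s<\infty$ with the example $\phi''(s)=(1+s^2)^{-2}$; both achieve the same bound $|\partial_y S^*(x,y)|\le C(|u'(y)|+|t'(y)|)$.
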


\begin{proof}
Subject to mild regularity and integrability conditions, any strictly $\M$-consistent score for $\rho^*$ is of the form
\begin{align*}
S^*(x,y) &= -\phi(x)t(y) + \phi'(x)\big(xt(y) - u(y)\big)+ g(y)\\
&= t(y)\big(\phi'(x)x - \phi(x)\big) - t(y)\phi'(x) + g(y).
\end{align*}
where $\phi$ is strictly convex with subgradient $\phi'$ and $g$ is arbitrary \citep[Theorem 8]{G11}. 
For the strictly convex function $\phi(x) = x^2/(1+|x|)$, we obtain that $\phi'(x) \in[-1,1]$ and $\phi'(x)x - \phi(x)\in[0,1]$ for all $x\in\R$.
If $u$ and $t$ are differentiable, then $|\partial_y S^*(x,y)|\le |u'(y)|+|t'(y)|$ such that we can apply Lemma \ref{lemma:sufficient} and Theorem \ref{thm:elicitability}.  
\end{proof}

\begin{remark}
We can further relax the differentiability condition on $u$ and $t$ in Proposition \ref{prop:ratio}.
Suppose that 
both $u$ and $t$ have a finite total variation on any compact interval, and we use the notation
$$
\|u\|(y) = \mathrm{sign}(y)\sup_{P\in\mathcal P(y)} \sum_{i=0}^{n_{P}-1}|u(z_{i+1})-u(z_i)|, \qquad y\in\R,
$$
where $\mathcal P(y)$ denotes the set of all partitions $P=\{z_0,z_1,\ldots,z_{n_P} : z_i<z_{i+1}\}$ of the interval $[\min(0,y),\max(0,y)]$, then 
 we obtain a strictly consistent score $S$ in \eqref{eq:S general} upon choosing 
$g(y) = \|u\|(y) + \|t\|(y) +y$, $y\in\R$. 
For example, if $u= 1$ and $t(y) = y^2$, then $\|u\|(y)=0$ and $\|t\|(y) = \mathrm{sign}(y)y^2$ for $y\in\R$. If $u(y) = \id_{[a,b)}(y)$ for $a<b$, then $\|u\|(y) = \id_{[a,\infty)}(y)+\id_{[b,\infty)}(y)$ for $y\in\R$.
\end{remark}

\begin{example}\label{exmp:shortfall risk}
Consider $\rho^*$ to be a shortfall risk measure induced by the loss function $\ell \colon\R\to\R$, which is left-continuous, non-decreasing and $\inf_{x\in\R}\ell(x)<0<\sup_{x\in\R}\ell(x)$. Then
$$
\rho^*(F) = \inf\left\{m\in\R: \int \ell(y-m)\d F(y)\le0\right\}.
$$
Further, suppose that we consider a class $\M$ of distributions such that for all $F\in\M$ and for all $m\in\R$
$$
m=\rho^*(F) \quad \text{ if and only if }\quad \int \ell(y-m)\d F(y)=0.
$$
This means that $\rho^*$ is $\M$-identifiable with the strict $\M$-identification function $V^*(m,y) = \ell(y-m)$. We can apply Osband's principle \citep[Theorem 3.2 and Corollary 3.3]{FZ16} to see that -- subject to regularity conditions -- any strictly $\M$-consistent score $S^*$ for $\rho^*$ is of the form
$$
S^*(x,y) = \int_0^x h(z)\ell(y-z)\d z + g(y)
$$
for some non-positive function $h\colon\R\to\R$. The question as to what choices of $h$ and $g$ lead to a score which is strictly increasing in its second argument is hard to answer in full generality. We only provide a sufficient condition: 
If $\ell$ is bounded from below, then we can find some function $g$ such that 
$$
S^*(x,y) = - \int_0^x \ell(y-z)\d z + g(y)
$$
is increasing. Indeed, 
$\partial_y \big[S^*(x,y) -g(y)\big] = \ell(y-x) - \ell(y)$. Hence, by Lemma \ref{lemma:sufficient} and Theorem \ref{thm:elicitability}, the pair $(Q_p,\rho)$ is $\M$-elicitable.
We emphasize that the boundedness of $\ell$ is not necessary for the elicitability of $(Q_p,\rho)$ as the case of $\rho^* = \E$ with $\ell(x) = x$ shows (or more generally the $\tau$-expectile; see Proposition \ref{prop:expectile}). 
\end{example}

\begin{remark}
{The   CoVaR \citep{AdrianBrunnermeier2016}  has become an important systemic risk measure. In its definition according to \cite{GT2013} and \cite{NoldeZhang2020}, it closely resembles a $p$-tail risk measure. For $\alpha\in(0,1)$ and $\beta\in[0,1)$, the CoVaR$_{\alpha|\beta}$ of a two-dimensional random vector $(X,Y)$ with joint distribution $F_{X,Y}$ is defined as CoVaR$_{\alpha|\beta}(F_{X,Y}) = \VaR_{\alpha}^-(F_{Y|X\ge\VaR_\beta^-(F_X)})$, where $F_{Y|X\ge\VaR_\beta^-(F_X)}$ is the conditional distribution of $Y$ on $\{X\ge\VaR_\beta^-(F_X)\}$. We assume that the marginal distributions are continuous and strictly increasing such that the distinction between the lower and upper quantile is inessential.
\cite{FisslerHoga2023} showed that  $F_{X,Y} \mapsto (\mathrm{CoVaR}_{\alpha|\beta}(F_{X,Y}), \VaR_\beta(F_X))$ generally fails to be elicitable (even though it is identifiable and thus has convex level sets), where $F_X$ is  the distribution of $X$. 
The reason why it is not possible to leverage the construction principle of Theorem \ref{thm:elicitability} is the fact that for CoVaR, the observation process is bivariate.}
\end{remark}

\begin{example}
    
{Similarly to the discussion in Example \ref{ex:identifiability RVaR}, Proposition \ref{prop:elicitability} implies that on reasonably large classes $\M$, the pair $(Q_r, \RVaR_{p,q})$, $0<r\le p<q <1$, fails to be elicitable.}
\end{example}

\subsection{Elicitation of the tail distribution}

In a risk management context, but also in other statistical contexts, one could be interested in the \emph{entire} tail distribution $F_p$ for some $F\in\M^0$, $p\in(0,1)$, instead of a given tail risk measure. 

Consistent scoring functions are tailored to evaluate the accuracy of a point forecast for a certain functional of interest. 
Their counterpart for probabilisitic forecasts, specifying the entire distribution, are \emph{proper scoring rules} \citep{GneitingRaftery2007}.
Mathematically, a scoring rule is a map $s\colon\M\times \R\to\R\cup\{\infty\}$ for some $\M\subseteq \M^0$. It is \emph{$\M$-proper} if  for all $F\in\M$
$$
F\in \argmin_{G\in\M} \int s(G,y)\d F(y).
$$
It is \emph{strictly $\M$-proper} if $F$ is the unique element of the above argmin.
In a similar vein, we define that $s$ is $\M$-proper for the $p$-tail if for all $F\in\M$
$$
\{H\in\M: H_p = F_p\}\subseteq \argmin_{G\in\M} \int s(G,y)\d F(y).
$$
Similarly, $s$ is strictly $\M$-proper for the $p$-tail if we have equality above.
Tail scores have been hinted at in the discussion at the end of Subsection 2.3 in \cite{HolzmannKlar2017}.

The most prominent proper scoring rules are the log-score, $\mathrm{LogScore}(f,y) = -\log(f(y))$, where $f$ is a predictive density, and the continuous ranked probability score (CRPS), 
$$\mathrm{CRPS}(F,y) = \int \big(F(z) - \id_{\{y\le z\}}\big)^2\d z. 
$$
The CRPS is strictly $\M^1$-proper.
Moreover, one can check that $\partial_y \mathrm{CRPS}(F,y) = 2F(y) -1$ for all $F\in\M^1$. 
Hence, invoking Lemma \ref{lemma:sufficient} and using a similar construction as in Theorem \ref{thm:elicitability} (ii), the map
 $$
 S(v,F,y) = \id_{\{y>v\}}\big(\mathrm{CRPS}(F,y)+2y\big) + \big(\id_{\{y\le z\}} - p\big) \big(\mathrm{CRPS}(F,v)+2v\big), \quad v,y\in\R, \ F\in\M^1
 $$
 is strictly $\M$-proper for 
$F\mapsto(Q_p(F), F_p)$.

While, to the best of our knowledge, the above score is novel to the statistical literature, an alternative strictly $\M^1$-proper score for the $p$-tail has been known in the form of the quantile weighted CRPS \citep{GneitingRanjan2011}
$$
\mathrm{CRPS}_p(F,y) = \int_p^1 2\big(\id_{\{y\le F^{-1}(r)\}} - r\big)\big(F^{-1}(r) - r\big)\d r.
$$ 
This result hinges on the fact that for any $F\in\M^0$ and $p\in(0,1)$, we have a bijection between $F_p$ and $\big(F^{-1}(r)\big)_{r\in(p,1]}$.

\section{Results on the left tail  or the body of the distribution}
\label{sec:sign conventions}

We have been working with tail risk measures defined for the right tail distribution.
Mathematically, it is also possible to translate all our results to the left tail distribution.
This may be useful when the positions of gains and losses are switched, as in, e.g., the convention of \cite{FS16}, or when the best scenarios of random outcomes are of interest. 
We only mention a few useful points when converting to the left tail.

For $q\in(0,1]$, we can define the left tail distribution of $F\in\M^0$ as 
$$
F^q(x) = \frac{\min(F(x),q)}{q},  \quad x\in\R.
$$ 
To formulate the corresponding tail risk measure, 
for a risk measure $\rho^*\colon\M\to\R$ and a class $\M$ such that $F^q\in\M$ for each $F\in\M$, we can define 
$$\rho^q(F) = \rho^*(F^q).$$
The pair $(\rho^q,\rho^*)$ is called \emph{left $q$-tail pair}.
With the above definitions, 
we can reproduce our main results as well. We only state the most interesting and relevant counterpart, namely the one of Theorem \ref{thm:elicitability} part (ii). 

\begin{theorem}\label{thm:elicitability left tail risk}
Let $(\rho^q, \rho^*)$ be a left $q$-tail pair for some $q\in(0,1)$. If $\rho^*\colon\M\to\R$ is $\M$-elicitable with the strictly $\M$-consistent score $S^*\colon\R\times\R\to\R$ and if for all $x\in\R$, $S^*(x,y)$ is strictly \emph{decreasing} in $y$, then
\begin{align*}
S(v,x,y) &= \id_{\{y\le v\}} S^*(x,y) - \big(\id_{\{y\le v\}}-q\big)S(x,v) + a(y), \\ \nonumber
 &= -\id_{\{y> v\}} S^*(x,y) - \big(\id_{\{y\le v\}}-q\big)S(x,v) + \tilde a(y),	\quad v,x,y\in\R,
\end{align*}
is strictly $\M$-consistent for $(Q_q,\rho^q)$, where $a\colon\R\to\R$ and $\tilde a\colon\R\to\R$ are some $\M$-integrable functions.
\end{theorem}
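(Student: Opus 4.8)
The plan is to reduce the left-tail statement to the right-tail statement of Theorem \ref{thm:elicitability} part (ii) by a sign-flip change of variables, rather than redoing the optimization argument from scratch. Concretely, for $F\in\M^0$ let $\check F$ denote the distribution of $-Y$ when $Y\sim F$, i.e.\ $\check F(x) = 1 - F((-x)-)$. One checks directly from the definitions that $F^q(x) = \bigl(\check F\bigr)_{1-q}(-x)$ up to the usual left/right-continuity caveats at the quantile, so that the left $q$-tail of $F$ corresponds to the right $(1-q)$-tail of $\check F$; in particular $Q_q(F) = -Q_{1-q}(\check F)$ and $\rho^q(F) = \check\rho^*\bigl(\check F_{1-q}\bigr)$ where $\check\rho^*(G) := \rho^*(\check G)$. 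The first step is therefore to record that $\check\rho^*$ is $\check\M$-elicitable with the strictly consistent score $\check S^*(x,y) := S^*(-x,-y)$, and that the hypothesis ``$S^*(x,\cdot)$ strictly decreasing'' translates precisely into ``$\check S^*(x,\cdot)$ strictly increasing,'' which is the monotonicity condition needed to invoke Theorem \ref{thm:elicitability}(ii) for the pair $(Q_{1-q},\check\rho)$.

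The second step is to apply Theorem \ref{thm:elicitability}(ii) to $\bigl(\check\rho, \check\rho^*\bigr)$ at level $1-q$, obtaining that $\bigl(Q_{1-q},\check\rho\bigr)$ is elicitable with the strictly consistent score
\[
\check S(v,x,y) = \id_{\{y>v\}}\check S^*(x,y) + \bigl(\id_{\{y\le v\}} - (1-q)\bigr)\check S^*(x,v) + a(y).
\]
The third step is to pull this back: define $S(v,x,y) := \check S(-v,-x,-y)$. Since $\argmin$ and strict consistency are preserved under the bijective reparametrization $(v,x)\mapsto(-v,-x)$ combined with replacing $F$ by $\check F$, this $S$ is strictly $\M$-consistent for $(Q_q,\rho^q)$. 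It then remains to unwind the substitution: $\id_{\{-y> -v\}} = \id_{\{y<v\}}$, which agrees with $\id_{\{y\le v\}}$ up to an $F$-null set on the relevant classes (or can be absorbed, since the two differ only on $\{y=v\}$), and $\id_{\{-y\le -v\}} = \id_{\{y\ge v\}} = 1 - \id_{\{y<v\}}$; substituting $\check S^*(x,y) = S^*(-x,-y)$ and relabeling $-x$ as $x$ yields the first displayed formula in the theorem. The second displayed formula follows from the first by writing $\id_{\{y\le v\}}S^*(x,y) = S^*(x,y) - \id_{\{y>v\}}S^*(x,y)$ and pushing the orphan term $S^*(x,y)$ — which depends on $x$, so this needs a word — into the additive summand; in fact one does not push $S^*(x,y)$ itself but rather notes $\id_{\{y\le v\}}S^*(x,y) = -\id_{\{y>v\}}S^*(x,y) + S^*(x,y)$ is \emph{not} of additive-in-$(x,y)$-separable form, so the correct manipulation is the same telescoping used in \eqref{eq:S general 1}--\eqref{eq:S general}, namely the two forms differ by adding a function of $y$ alone only after the $\id_{\{y\le v\}} - q$ correction term is accounted for; this is the routine algebra that turns \eqref{eq:identification function} into \eqref{eq:identification function2} in Remark \ref{rem:identifiability}, now transcribed to scores.

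The main obstacle I anticipate is purely bookkeeping rather than conceptual: getting the left/right continuity conventions at the quantile consistent through the reflection $y\mapsto -y$, so that $Q_q(F) = -Q_{1-q}(\check F)$ holds as stated with the interval-valued quantile (the endpoints $\VaR^-$ and $\VaR^+$ swap roles under reflection), and making sure the indicator $\id_{\{y< v\}}$ versus $\id_{\{y\le v\}}$ discrepancy is genuinely immaterial on $\M$ — it is, because changing the score on the diagonal $\{y=v\}$ alters the expected score only through $F(\{v\})$, and the generalized-piecewise-linear argument underlying Theorem \ref{thm:elicitability}(ii) is insensitive to this, exactly as in \cite{Gneiting2011b}. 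Once these conventions are pinned down, the proof is a one-paragraph reduction, which is why the theorem is stated without a separate proof in this section and folded into the ``mutatis mutandis'' remark; if a proof is desired in the appendix, the above is the skeleton.
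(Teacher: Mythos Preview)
Your reduction via the reflection $y\mapsto -y$ is correct and takes a genuinely different route from the paper. The paper gives no separate argument but states that the proof is ``almost identical to the one of Theorem~\ref{thm:elicitability}~(ii)''; that is, the intended proof repeats the two-step optimization directly: for fixed $x$, the map $(v,y)\mapsto S(v,x,y)$ is a generalized piecewise linear loss for $Q_q$ because $v\mapsto -S^*(x,v)$ is strictly increasing, and for $v\in Q_q(F)$ the expected score collapses (up to an additive constant) to $q\int S^*(x,y)\,\d F^q(y)$, minimized at $\rho^*(F^q)=\rho^q(F)$. Your approach instead transports the problem to the right-tail setting and invokes the existing theorem wholesale; this is conceptually cleaner and makes transparent why the monotonicity hypothesis flips from ``increasing'' to ``decreasing,'' at the price of the reflection bookkeeping you flag (which you handle correctly: the $\id_{\{y<v\}}$ versus $\id_{\{y\le v\}}$ discrepancy is immaterial for the argmin). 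One small simplification: you may take $\check S^*(x,y)=S^*(x,-y)$ without also negating the first argument; then $\check\rho^*=\rho^*\circ\check{(\cdot)}$ directly and no relabeling of $x$ is needed in the pullback.

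On the second displayed formula: your instinct that ``this needs a word'' is right, and the telescoping you attempt does not exist. The two lines differ by exactly $S^*(x,y)$, which depends on $x$, so they cannot be equal with $a,\tilde a$ functions of $y$ alone; no correction-term manoeuvre in the spirit of Remark~\ref{rem:identifiability} repairs this. This is an artefact of the statement as printed, not a gap in your argument. Establish strict consistency of the first form via your reduction and do not try to derive the second line as an equality.
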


The proof of Theorem \ref{thm:elicitability left tail risk} is almost identical to the one of Theorem \ref{thm:elicitability} (ii) and therefore omitted. 
While the left-tail results can be obtained directly form the right-tail results, 
more work is needed to combine these two tails. One relevant question is related to the body part of the distribution,
which can be seen as the intersection of two tails. 
For $0\le p<q\le 1$ and $F\in\M^0$, we can define $F^{[p,q]}\in\M^0$ by 
$$
F^{[p,q]}(x) = \frac{(\min(F(x),q) - p)_+}{q-p}, \quad x\in\R,
$$ and a risk measure $\rho^{[p,q]}$ via another generating risk measure $\rho^*$ by
\begin{equation}
\label{eq:rho p q}
    \rho^{[p,q]}(F) = \rho^*(F^{[p,q]}).
\end{equation}

{The most prominent example of a risk measure depending on the body of the distribution is RVaR. In particular, for $0<p<q<1$, $\RVaR_{p,q}$ arises in \eqref{eq:rho p q} with $\rho^*$ as the mean.}

We obtain the following counterpart to Theorem \ref{thm:elicitability} part (ii) and Theorem \ref{thm:elicitability left tail risk}.

\begin{theorem}\label{thm:elicitability rho p q}
Let $\rho^*\colon\M\to\R$ be $\M$-elicitable with strictly $\M$-consistent score $S^*\colon\R\times\R\to\R$.
For $0<p<q<1$, let $\rho^{[p,q]}$ be defined via \eqref{eq:rho p q}. 
Then the function given by
\begin{align}
\label{eq:score p q}
&S(v_1,v_2,x,y) \\ \nonumber
&= \id_{\{v_1<y \le v_2\}}S^*(x,y) +\big(\id_{\{y\le v_1\}}-p\big)S^*(x,v_1) - \big(\id_{\{y\le v_2\}}-q\big)S^*(x,v_2) \\
& \quad + \big(\id_{\{y\le v_1\}}-p\big)g_1(v_1) + \id_{\{y>v_1\}}g_1(y)
+ \big(\id_{\{y\le v_2\}}-p\big)g_2(v_2) + \id_{\{y>v_2\}}g_2(y) + a(y),\nonumber
\end{align}
is strictly $\M$-consistent for $(Q_p,Q_q,\rho^{[p,q]})$, where $a\colon\R\to\R$ and $\tilde a\colon\R\to\R$ are some $\M$-integrable functions, and $g_1\colon\R\to\R$ and $g_2\colon\R\to\R$ are such that 
for each $x\in\R$,
the functions
	\begin{equation}
    \label{eq:monotonicity condition}
	    v_1 \mapsto g_1(v_1) + S^*(x,v_1) \quad \text{and }\quad 
	v_2 \mapsto g_2(v_2)  - S^*(x,v_2)
	\end{equation}
are strictly increasing. 
\end{theorem} 
\begin{proof}
    Due to the monotonicity conditions in \eqref{eq:monotonicity condition}, for fixed $v_2,x\in\R$, the score $(v_1,y)\mapsto S(v_1,v_2,x,y)$ 
    is a generalized piecewise linear loss, and hence $\M$-strictly consistent for $Q_p$.
    Similarly, for fixed $v_1,x\in\R$, the score $(v_2,y)\mapsto S(v_1,v_2,x,y)$ is strictly $\M$-consistent for $Q_q$. Finally, for $F\in\M$ and $v_1\in Q_p(F)$, $v_2\in Q_q(F)$ it holds that \begin{align*}
        \int S(v_1,v_2,x,y)\d F(y) 
        = \int S^*(x,y) \d \big(\min(F(x),q)-p\big)_+ + \int \tilde a(y)\d F(y),
    \end{align*} 
    where $\tilde a$ does not depend on $x$. Hence,
    \begin{align*}
        \argmin_{x\in\R}\int S(v_1,v_2,x,y)\d F(y) 
        = \argmin_{x\in\R}\int S^*(x,y) \d F^{[p,q]}(x) = \rho^*(F^{[p,q]}) = \rho^{[p,q]}(F),
    \end{align*}
    which shows the claim.
\end{proof}
 
Identifiability results in the spirit of Theorem \ref{thm:identifiability} can also be obtained. 

\begin{theorem}
\label{thm:identifiability rho p q}

{Let $\rho^*\colon\M\to\R$ be $\M$-identifiable with strict $\M$-identification function $V^*\colon\R\times\R\to\R$.
For $0<p<q<1$, let $\rho^{[p,q]}$ be defined via \eqref{eq:rho p q}. 
Then $(Q_p, Q_q, \rho^{[p,q]})$ is $\M_{(p)}\cap \M_{(q)} \cap \M$-identifiable with a strict $\M_{(p)}\cap \M_{(q)} \cap \M$-identification function $V:\R^4\to \R$ given by
\begin{equation}
\label{eq:identification function 2}
\begin{pmatrix}
 \id_{\{ y \le v_1 \}} - p \\
 \id_{\{ y \le v_2 \}} - q \\
\id_{\{v_1<y \le v_2\}}V^*(x,y) +\big(\id_{\{y\le v_1\}}-p\big)V^*(x,v_1) - \big(\id_{\{y\le v_2\}}-q\big)V^*(x,v_2)
 \end{pmatrix}
\end{equation}
 for $ (v_1,v_2,x,y )\in\R^4 $.}
\end{theorem}
{The proof of Theorem \ref{thm:identifiability rho p q}
is completely analogous to the one of Theorem \ref{thm:identifiability} and therefore omitted. 
In line with Remark \ref{rem:identifiability}, instead of the identification function in \eqref{eq:identification function 2} we can also use the strict $\M_{(p)}\cap \M_{(q)} \cap \M$-identification function without ``correction terms'' in the form of 
\begin{equation*}
V(v_1,v_2,x,y) = \begin{pmatrix}
 \id_{\{ y \le v_1 \}} - p \\
 \id_{\{ y \le v_2 \}} - q \\
\id_{\{v_1<y \le v_2\}}V^*(x,y) 
 \end{pmatrix}.
\end{equation*}  
    Theorems \ref{thm:elicitability rho p q} and \ref{thm:identifiability rho p q} generalize results on the elicitability and identifiability of the triplet $(Q_p,Q_q,\RVaR_{p,q})$, $0<p<q<1$, from \cite{FZ21}, which are summarized in Example \ref{ex:RVaR}.
    In fact, using the canonical identification function $V^*(x,y) = x-y$ for the mean $\rho^*$, the identification function in \eqref{eq:identification function 2} equals the one in \eqref{eq:V RVaR}, except for a factor of $q-p$ in the third component.
    Similarly, the scoring function in \eqref{eq:score p q} yields the strictly consistent score of Equation \eqref{eq:S RVaR} upon choosing the strictly consistent score for the mean $S^*(x,y) = \frac{1}{q-p}\big(\phi'(x)(x-y) - \phi(x)\big)$, where $\phi$ is strictly convex. 
    Also the monotonicity conditions in \eqref{eq:monotonicity condition} are equivalent to 
    $v_1\mapsto g_1(v_1) - \frac{1}{q-p}\phi'(x)v_1$ 
and 
$v_2\mapsto g_2(v_2) + \frac{1}{q-p}\phi'(x)v_2$
being strictly increasing for all $x\in\R$.
}

 \section{Concluding remarks}\label{sec:6} 

Our main results in Sections \ref{sec:4} and \ref{sec:5} establish an intimate connection between identifiability or elicitability of a tail risk measure $\rho$ and that of its generator $\rho^*$, as well as their identification functions and score functions. 
  To summarize briefly,
  under suitable conditions,
the corresponding property of $\rho^*$ implies that of $(Q_p,\rho)$, 
and the corresponding property of $\rho$ or $(Q_p,\rho)$ implies that of $\rho^*$.  
Moreover, there is an explicit way to convert between their identification functions and  between their score functions.

Some open questions remain. 
First, elicitability of risk measures defined for the intermediate region of the distribution, instead of the tail distributions, has been discussed briefly 
in Section \ref{sec:sign conventions}, without being fully developed, as we only obtained a parallel result to Theorem \ref{thm:elicitability} part (ii). 

Second, we provided in Theorem \ref{thm:elicitability} some sufficient conditions on the elicitable risk measure $\rho^*$ that guarantee 
   the elicitability of 
   the $p$-tail risk measure $\rho$ itself or the pair $(Q_p, \rho)$. 
A full characterization of all elicitable risk measures $\rho^*$ yielding the  elicitability of $\rho$ or $(Q_p, \rho)$  is missing. 
 Theorem \ref{thm:elicitability}  does not provide a full answer to this question. In particular, one needs to rely on the knowledge of the existence of a strictly consistent scoring function for $\rho^*$, $S^*(x,y)$, that is strictly increasing in $y$.  
Third, the elicitation complexity (\citealp{FK21}) of tail risk measures remains unclear. 
Roughly (with suitable regularizing conditions), the elicitation complexity of $\rho$ is a positive integer $k$ that quantifies how many dimensions are needed to make $\rho$ a function of an elicitable  vector $(\rho_1,\dots,\rho_k)$. 
It may be tempting to 
conjecture that if $\rho^*$ has elicitation complexity $k$, then $\rho$ has elicitation complexity at most $k+1$, seeing from the example of $(\rho,\rho^*)=(\ES_p,\E)$ with elicitation complexity $(2,1)$; see \cite{FK21}. This question seems to be very challenging to answer with current techniques.

   \subsection*{Conflict of interest statement}
{The authors declare no conflict of interest.}

   \subsection*{Data availability statement}
No datasets were generated or analyzed during the current study. 
  
   \subsection*{Acknowledgements}

   The authors thank the Editor, an Associate Editor, and the anonymous referees for helpful comments. 
 F.~Liu  and R.~Wang  acknowledge  financial support from the Natural Sciences and Engineering Research Council of Canada (RGPIN-2024-03728, CRC-2022-00141,  RGPIN-2020-04717, DGECR-2020-00340).
 L.~Wei acknowledges financial support from the National Natural Science Foundation of China (No.~12271415).

 \appendix 

\section*{Appendices}

\section{Risk measures}
\label{app:Risk measures}

{We take the mapping $\rho:\M\to [-\infty,\infty]$ as the primary object in the paper. In the literature, a  risk measure are often defined as a mapping $\hat \rho$ from a set of random variables $\mathcal X$ on an atomless probability space $(\Omega, \mathcal F, \mathbb P)$ to $[-\infty,\infty]$ or $\R$ \citep{FS16}. 
Commonly used $\hat \rho $ are 
 \emph{law-based} (also called \emph{law-invariant}),  meaning that for any $X,X'\in\mathcal X$ it holds that $\hat \rho(X) = \hat \rho(X')$ whenever the distributions of $X$ and $X'$ coincide.
 It is easy to see that 
a  law-based  $\hat \rho:\X\to [-\infty,\infty]$ one-to-one corresponds  to 
 a risk measure $\rho:\M\to [-\infty,\infty]$ with $\M = \{F_X\in\M^0 : X\in\mathcal X\}$  via
\begin{equation}\label{eq:def rho}
 \rho(F)  = \hat \rho(X) \quad \text{ where $X$ is distributed as $F$.}
\end{equation} 
 Invoking the relation \eqref{eq:def rho}, our study can easily be translated into properties of $\hat \rho$.
Here, we follow the sign convention of \cite{QRM15} to consider random variables in $\X$ as losses.} 
 
Some commonly used desirable properties of {risk measure}s are listed below. 
\begin{enumerate}[(A4)]
\item[(A0)] \textit{Law-invariance}: if $X,Y\in \X$ and $X\laweq Y$, then $\hat \rho(X)=\hat \rho(Y)$.
\item[(A1)] \textit{Monotonicity}: $\hat \rho(X)\leq \hat \rho(Y)$ if $X \leq Y $ a.s, $ X,Y \in \X$.
 \item[(A2)] \textit{Translation-equivariance}:  $\hat \rho(X-m)=\hat \rho(X)-m$ for $m\in\R$ and $X\in \X$.
\item[(A3)] \textit{Convexity}:  $\hat \rho(\lambda X+(1-\lambda)Y)\leq \lambda\hat \rho(X)+(1-\lambda)\hat \rho(Y)$ for  $\lambda\in[0,1]$ and $X,Y\in \X$.
  \item[(A4)] \textit{Positive homogeneity}:  $\hat \rho(\lambda X)=\lambda \hat \rho(X)$ for $\lambda> 0$ and $X\in \X$.
 \item[(A5)] \textit{Subadditivity}: $\hat \rho(X+Y)\leq \hat \rho(X)+\hat \rho(Y)$ for $X,Y\in \X$.
  \item[(A6)] \textit{Comonotonic additivity}:  $\hat \rho(X+Y)=\hat \rho(X)+\hat \rho(Y)$ if $X, Y\in \X$ are comonotonic.\footnote{Two random variables $X$ and $Y$ are \emph{comonotonic} if there exists $\Omega_0\in \mathcal F$ with $\p(\Omega_0)=1$ and $  (X(\omega)-X(\omega'))(Y(\omega)-Y(\omega'))\geq 0~~\mbox{for all}~\omega,\omega'\in \Omega_0.
 $ Comonotonicity of  $X$  and $Y$ is equivalent to the existence of a random variable $Z\in L^0$ and two non-decreasing functions $f$ and $g$ such that $X=f(Z)$ and $Y=g(Z)$ almost surely. We refer to \cite{DDGKV02} for an overview on comonotonicity.}
   \end{enumerate}
 
It is well known that any pair of properties (A3), (A4) and (A5) implies the remaining third one.
For economic interpretations of these properties, we refer to  \cite{ADEH99}, \cite{FS16} and \cite{D12}.

\begin{definition} \label{def:risk measures}
A risk measure $\rho$ is 
a  \emph{monetary risk measure} if its corresponding $\hat\rho$ satisfies (A1) and (A2),  it is a \emph{convex risk measure} if $\hat\rho$ satisfies (A1)--(A3), and it is  a \emph{coherent risk measure} if $\hat\rho$ satisfies  (A1)--(A4). 
\end{definition}
{Some of the above terminologies appear in the main paper, but they are not essential for our results.}

\section{Assumptions}
\label{sec:Assumptions}

The following list of assumptions is an adaptation of the assumptions for Proposition 1 in \cite{erratum}; see also \cite{FZ16}. 
This list of assumptions is needed for  Theorem \ref{thm:elicitability}, part (iii). 
For an identification and scoring functions $V(x,y)$ and $S(x,y)$, $x\in\R^k$, $y\in\R$, we use the shorthands 
$\bar V(x,F):= \int V(x,y)\d F(y)$ and $\bar S(x,F):= \int S(x,y)\d F(y)$ for $F\in\M^0$, tacitly assuming that the integral is well-defined.
In the sequel, we shall only consider $\M\subseteq \M_p^{\rm c}\cap\M_{(p)}$. On $\M_{p}^{\rm c}$, $Q_p(F)$ is a singleton and we identify it with its unique element, $\VaR_p^-(F)$.
Moreover, here and in the proof of Section \ref{Proof}, we work with an identification function of the form \eqref{eq:identification function2}. 
We remark that we could equivalently work with an identification function of the form \eqref{eq:identification function}.
\begin{assumption}\label{ass:main thm}
\begin{enumerate}[(i)]
\item
Let $\A = \{(\VaR_p^-(F),\rho(F)) : F\in\M\}\subseteq \R^2$ and suppose that the interior of $\A$, $\interior(\A)$ is simply connected.
\item
The generator $\rho^*$ is identifiable with a strict $\M$-identification function $V_{\rho^*}\colon \R^2\to\R$, which is locally bounded.
\item
The identification function $V\colon\R^3\to\R^2$ defined by 
$$
V(v,x,y) = \begin{pmatrix}
\id_{\{y\le v\}} - p \\
\id_{\{y>v\}} V_{\rho^*}(x,y)
\end{pmatrix}
$$
is such that its expectation with respect to any $F\in\M$
is continuously differentiable. In particular, any $F\in\M$ has a continuous derivative $f$ (which is also its Lebesgue density).
\item
The expectation of the score $S$ with respect to any $F\in\M$ is twice continuously differentiable.
\item
Suppose that $\M$ is convex and that for any 
 $(v,x)\in\A$ there  are $F_1,F_2,F_3 \in\M$ such that 0 is contained in the interior of the convex hull of $\{\bar V(v,x,F_1), \bar V(v,x,F_2), \bar V(v,x,F_3)\}$.
 \item
 Suppose that for any $(v,x)\in\A$ there are $F_1,F_2\in\M$ with derivatives $f_1$ and $f_2$ such that $(v,x) = (\VaR_p^-(F_1),\rho(F_1))=(\VaR_p^-(F_2),\rho(F_2))$, $\partial_x \bar V_2(v,x,F_1) = \partial_x \bar V_2(v,x,F_2)$, but $f_1(v)\neq f_2(v)$.
\item
Suppose that the complement of the set 
$$
\{(v,x,y)\in \A\times \R : V(v,x,\cdot) \text{ and } S(v,x,\cdot) \text{ are continuous at the point }y\}
$$
has 3-dimensional Lebesgue measure zero.
\item
For every $y\in\R$ there exists a sequence $(F_n)_{n\in\mathbb N}$ of distributions $F_n\in\M$ that converges weakly to the Dirac-measure $\delta_y$ such there is some $C>0$ such that $F_n([-C,C])=1$ for all $n\in\mathbb N$.
\end{enumerate}
\end{assumption}

\section{Proofs in Section \ref{sec:5}}
\label{Proof}

\begin{proof}[Proof of Proposition \ref{prop:elicitability}] 
\begin{enumerate}[(i)]
\item
Let $S$
be a strictly $\M\cup\{(1-p)G + p\delta_r : G\in\M\}$-consistent score for $\rho$ and $S_r^*$ as given in \eqref{eq:S*} for some $r\in\R$.
Choose some $G\in\M_{\ge r} \cap \M$ and define $F(y) = \id_{\{y\ge r\}}\big((1-p)G(y)+p\big)$.
Due to the assumed elicitability of $\rho$ and since $F_p = G$, we get
\begin{align*}
\rho^*(G) = \rho^*(F_p) = \rho(F) 
&= \argmin_{x\in\R}  \int S(x,y) \d F(y) \\
&= \argmin_{x\in\R} \int S(x,y) \d \left (\id_{\{y\ge r\}}(1-p)G(y)\right ) + pS(x,r)\\
&= \argmin_{x\in\R}  \int \big ((1-p) S(x,y) + pS(x,r)\big )\d G(y)\\
&=\argmin_{x\in\R}  \int S^*_r(x,y)\d G(y) ,
\end{align*}
which shows the claim.

\item
Let $S(v,x,y)$, $v,x,y\in\R$, be a strictly $\M\cup\{(1-p)G + p\delta_r : G\in\M\}$-consistent score and $S_r^*$ given in \eqref{eq:S*2} for some $r\in\R$.
For any $F\in \M\cup\{(1-p)G + p\delta_r : G\in\M\}$ and $v\in Q_p(F)$ it holds that 
$$
\rho(F) = \argmin_{x\in\R} \int_{\R} S(v,x,y)\d F(y).
$$
Let $G\in\M_{\ge r}\cap \M$. Then, as in part (i), define $F(y) = \id_{\{y\ge r\}}\big((1-p)G(y) + p\big)$, resulting in $F\in \M\cup\{(1-p)G + p\delta_r : G\in\M\}$ and $r = \VaR_p^-(F)\in Q_p(F)$. 
Again, $F_p = G$, which implies 
\begin{align*}
\rho^*(G) = \rho^*(F_p) = \rho(F) 
&=  \argmin_{x\in\R}  \int S(r,x,y) \d F(y) \\
&= \argmin_{x\in\R}  \int S(r,x,y) \d \left (\id_{\{x\ge r\}}(1-p)G(x) \right )+ pS(r,x,r) \\
&= \argmin_{x\in\R} \int \left ((1-p) S(r,x,y) + pS(r,x,r)\right )\d G(x)\\
&=\argmin_{x\in\R}  \int S^*_r(x,y)\d G(y),
\end{align*}
which shows the claim. \qedhere 
\end{enumerate}
\end{proof}

For a proof of Theorem \ref{thm:elicitability} part (iii), in the sequel, we shall only consider $\M\subseteq \M_p^{\rm c}\cap\M_{(p)}$. On $\M_{p}^{\rm c}$, $Q_p(F)$ is a singleton and we identify it with its unique element, $\VaR_p^-(F)$.
Moreover, here and in the Assumption Section \ref{sec:Assumptions}, we work with an identification function of the form \eqref{eq:identification function2}. 
We remark that we could equivalently work with an identification function of the form \eqref{eq:identification function}.
\begin{proof}[Proof of Theorem \ref{thm:elicitability} part (iii)]
Using Theorem \ref{thm:identifiability} and Remark \ref{rem:identifiability}, the function
$$
V(v,x,y) = \begin{pmatrix}
\id_{\{y\le v\}} - p \\
\id_{\{y>v\}} V_{\rho^*}(x,y)
\end{pmatrix}, \quad v,x,y\in\R
$$
is a strict $\M$-identification function for $(\VaR_p^-,\rho)$; see also \cite[Proposition 3.4]{FisslerHoga2023}.
We obtain for $F\in\M$ with derivative (and density) $f$
\begin{align*}
\bar V_1(v,x,F) &= F(v) - p,
\qquad \partial_v \bar V_1(v,x,F) = f(v),
\qquad \partial_x \bar V_1(v,x,F) =0,\\
\bar V_2(v,x,F) &= \int_v^\infty V_{\rho^*}(x,y)\d F(y), 
\qquad \partial_v \bar V_2(v,x,F) = -V_{\rho^*}(x,v)f(v).
\end{align*}
Due to Osband's principle \citet[Theorem 3.2]{FZ16}, there is a matrix-valued function $h\colon\interior(\A)\to\R^{2\times2}$ such that 
$$
\nabla \bar S(v,x,F) = h(v,x) \nabla V(v,x,F) \qquad \text{for all }(v,x)\in\interior(\A), \ F\in\M.
$$
We obtain for the second order derivatives
\begin{align*}
\partial_x\partial_v  \bar S(v,x,F)
&= \partial_x h_{11}(v,x) \big(F(v) - p\big) + h_{12}(v,x) \partial_x \bar V_2(v,x,F) + \partial_x h_{12}(v,x) \bar V_2(v,x,F),\\
\partial_v\partial_x  \bar S(v,x,F)
&=  h_{21}(v,x)f(v) + \partial_v h_{21}(v,x) \big(F(v) - p\big) - h_{22}(v,x) V_{\rho^*}(x,v)f(v) \\& \qquad + \partial_v h_{22}(v,x) \bar V_2(v,x,F).
\end{align*}
The Hessian of the expected score $\bar S(v,x,F)$ needs to be symmetric for all $(v,x)\in\interior(\A)$ and for all $F\in\M$.
Evaluating $\partial_x\partial_v  \bar S(v,x,F) = \partial_v\partial_x  \bar S(v,x,F)$ for $(v,x) = (\VaR_p^-(F),\rho(F))$, we get
$$
h_{12}(v,x) \partial_x \bar V_2(v,x,F) = f(v)\Big(h_{21}(v,x) - h_{22}(v,x)V_{\rho^*}(x,v)\Big).
$$
Using part (vi) of Assumption \ref{ass:main thm}, this implies that 
$$
h_{12}(v,x) =0, \qquad h_{21}(v,x)= h_{22}(v,x)V_{\rho^*}(x,v).
$$
Exploiting the surjectivity, this holds for all $(v,x)\in\interior(\A)$.
We evaluate $\partial_x\partial_v  \bar S(v,x,F) = \partial_v\partial_x  \bar S(v,x,F)$ again, but now for a general $(v,x)\in\interior(\A)$, and obtain
$$
\partial_x h_{11}(v,x) \big(F(v) - p\big) =  \partial_v h_{21}(v,x) \big(F(v) - p\big) + \partial_v h_{22}(v,x) \bar V_2(v,x,F).
$$
If $v=\VaR_p^-(F)$ and $x\neq \rho(F)$, we get that $\partial_v h_{22}(v,x)=0$. Exploiting the surjectivity, we get
$$
\partial_v h_{22}(v,x)=0, \qquad \partial_x h_{11}(v,x) =  \partial_v h_{21}(v,x), \qquad \text{ for all }(v,x)\in\interior(\A).
$$
Hence, we can write $h_{22}$ as a function of its second argument $x$ only. Now, we integrate the partial derivatives of the expected score.
\begin{align*}
\int \partial_v \bar S(v,x,F) \d v 
&= \int h_{11}(v,x) \big(F(v) - p\big)\d v 
= \int h_{11}(v,x) \int \big(\id_{\{y\le v\}} - p\big) \d F(y)\d v \\
&= \int G(y,x)\id_{\{y> v\}} +  \big(\id_{\{y\le v\}} - p\big)G(v,x) + a_1(y,x)\d F(y),
\end{align*}
where $\partial_v G(v,x)=h_{11}(v,x)$ and $a_1(y,x)$ is an integration constant not depending on $v$.
Similarly, 
\begin{align*}
&\int \partial_x \bar S(v,x,F) \d x 
= \int h_{22}(x)\Big(V_{\rho^*}(x,v)\big(F(v)-p\big) + \bar V_2(v,x,F)\Big) \d x \\
&= \int h_{22}(x) \int V_{\rho^*}(x,v)\big(\id_{\{y\le v\}} - p\big) + \id_{\{y> v\}} V_{\rho^*}(x,y)\d F(y) \d x \\
&= \int  \int h_{22}(x) V_{\rho^*}(x,v)\d x \big(\id_{\{y\le v\}} - p\big) + \id_{\{y> v\}}  \int h_{22}(x) V_{\rho^*}(x,y)\d x \d F(y) \\
&= \int  \Big(S_{\rho^*}(x,v) +a_2(v,y)\Big) \big(\id_{\{y\le v\}} - p\big) + \id_{\{y> v\}}  \Big( S_{\rho^*}(x,y) +a_3(v,y)\Big)   \d F(y),
\end{align*}
where $\partial_x S_{\rho^*}(x,y) = h_{22}(x) V_{\rho^*}(x,y)$ and $a_2(v,y)$ and $a_3(v,y)$ are integration constants not depending on $x$.
A comparison of the two results and an application of Proposition 1 in \cite{erratum} yields the form of $S$ in \eqref{eq:S general}. Since for any $F\in\M$, the function
$(v,y)\mapsto S(v,\rho(F),y)$ needs to be (strictly) $\M$-consistent for $\VaR_p^-$, $S_{\rho^*}(x,y)$ needs to be (strictly) increasing in $y$.
On the other hand, $(x,y)\mapsto S(\VaR_p^-(F),x,y)$ needs to be (strictly) $\M$-consistent for $\rho$, which is why $S_{\rho^*}$ needs to be (strictly) $\M$-consistent for $\rho^*$.
\end{proof}

 \small

\end{document}